% ----------------------------------------------------------------
% AMS-LaTeX Paper ************************************************
% **** -----------------------------------------------------------
\documentclass[12pt, a4paper, reqno]{amsart}
\usepackage{mathrsfs}
\usepackage{bbm}
\usepackage{amsmath,amscd,amssymb,latexsym}
\usepackage[all]{xy}
\usepackage{color}
\usepackage[section]{placeins}
\usepackage[latin1]{inputenc}

\usepackage{algorithm}
\usepackage{algpseudocode}
\usepackage{amsmath}
\usepackage{cases}
\usepackage{arydshln}
\usepackage{cite}

  % Use Input in the format of Algorithm
 % Use Output in the format of Algorithm
% -------------------------ÔÊÐíËã·¨¿çÒ³-------------
\makeatletter

\makeatother

\input xypic

\textwidth=152mm \oddsidemargin=4mm
\evensidemargin=\oddsidemargin
\textheight=238mm \advance\voffset-20mm \headheight=10mm \headsep=8mm
%\textheight=253mm \advance\voffset-20mm \headheight=5mm \headsep=4mm

%\mathversion{bold}

\allowdisplaybreaks[4]

\DeclareMathVersion{can}
\DeclareMathAlphabet{\can}{OT1}{cmss}{m}{n}
%\SetMathAlphabet{\no}{normal}{OT1}{cmss}{m}{n}
% ----------------------------------------------------------------
\vfuzz2pt % Don't report over-full v-boxes if over-edge is small
\hfuzz2pt % Don't report over-full h-boxes if over-edge is small
% THEOREMS -------------------------------------------------------
\newtheorem{thm}{Theorem}[section]
\newtheorem{cor}[thm]{Corollary}

\newtheorem{lem}[thm]{Lemma}
\newtheorem{prop}[thm]{Proposition}
\newtheorem{rem}[thm]{Remark}
\newtheorem{exa}[thm]{Example}
\theoremstyle{definition}
\newtheorem{defn}[thm]{Definition}
\theoremstyle{fact}

\theoremstyle{conjecture}

\newtheorem{alg}[thm]{Algorithm}

\numberwithin{equation}{section}
% MATH -----------------------------------------------------------

%\newcommand{\fi}{\mathfrak{i}}

% ----------------------------------------------------------------
% ----------------------------------------------------------------
\begin{document}
%\large

\title[Hulls of generalized Reed-Solomon codes]{Determining hulls of generalized Reed-Solomon codes from algebraic geometry codes}

\author [Jia] {Xue Jia}
\address{\rm School of Mathematics, Nanjing University of Aeronautics and Astronautics, Nanjing,  211100, P. R. China and State Key Laboratory of Cryptology, P.O. Box 5159, Beijing 100878, P. R. China}
\email{jiaxue0904@163.com}

\author[Yue]{Qin Yue}
\address{\rm School of Mathematics, Nanjing University of Aeronautics and Astronautics, Nanjing,  211100, P. R. China and State Key Laboratory of Cryptology, P.O. Box 5159, Beijing 100878, P. R. China}
\email{yueqin@nuaa.edu.cn}

\author [Sun] {Huan Sun}
\address{\rm School of Mathematics, Nanjing University of Aeronautics and Astronautics, Nanjing, 211100, P. R. China and State Key Laboratory of Cryptology, P.O. Box 5159, Beijing 100878, P. R. China}
\email{sunhuan6558@163.com}

\author [Sui] {Junzhen Sui}
\address{\rm School of Mathematics and Information Science, Henan Normal University, Xinxiang, 453007, China}
\email{suijunzhen@163.com}

\thanks{The paper is supported by the National Natural Science Foundation of China ( No. 62172219 and No.
 12171420), the Natural Science Foundation of Shandong Province under Grant ZR2021MA046, the Natural Science Foundation of Jiangsu Province under Grant BK20200268, the Innovation Program for Quantum Science and Technology under Grant 2021ZD0302902.}

 \keywords{Hull, Generalized Reed-Solomon code, Rational algebraic geometry code, Self-orthogonal code, Self-dual code}

\subjclass[2010]{94B05}

%\date{\today}%
%\dedicatory{}%
%\commby{}%
% ----------------------------------------------------------------

\begin{abstract}
In this paper, we provide  conditions that hulls of  generalized Reed-Solomon (GRS)  codes are also GRS codes from algebraic geometry codes. If the conditions are not satisfied, we provide a method of linear algebra to find the bases of hulls of GRS codes and give formulas to compute their dimensions. Besides, we explain that the conditions are too good to be improved by some examples. Moreover, we show self-orthogonal and self-dual GRS codes.
\end{abstract}
\maketitle
%\tableofcontents
% ----------------------------------------------------------------
%\setcounter{section}{-1}
\section{Introduction}
Let $\Bbb F_q$ be a finite field with $q$ elements and $\Bbb F_q[z]$ be the polynomial ring over $\Bbb F_q$, where   $q$ is a power prime. An $[n,k,d]$ linear code $\mathcal C$ over $\Bbb F_q$ is a $k$-dimensional subspace of $\Bbb F_q^n$ with length $n$ and minimum distance $d$. It is well known that the parameters $n,k$, and $d$  satisfy the Singleton bound:   $d\leq n-k+1$. A linear code with $d=n-k+1$ is called a maximum distance separable (MDS) code, which is extensively used in communications and has good applications in minimum storage codes and quantum codes \cite{CL,CH1,CH2,FF,FFL,HR,J,JX,LLL,RH,SY,Y}. An important  class of MDS codes are generalized Reed-Solomon (GRS) codes, which can be encoded and decoded efficiently. Moreover, some of the most important codes for practical use (such as BCH codes and Goppa codes) can be represented as subcodes of GRS codes in a natural way.

For two vectors $\boldsymbol x= (x_1,\ldots,x_n)$ and $\boldsymbol y= (y_1,\ldots,y_n)$ of $\Bbb F_q^n$, the Euclidean inner product of $\boldsymbol x$ and $\boldsymbol y$ is defined as
$\langle\boldsymbol x,\boldsymbol y\rangle=\sum_{i=1}^n x_i y_i.$
Let $\mathcal C$ be a linear code over $\Bbb F_q$, then  the Euclidean dual code of $\mathcal C$ is defined by
$$
\mathcal C^\perp=\{\boldsymbol x \in \Bbb F_q^n \mid \langle\boldsymbol x,\boldsymbol y\rangle=0 ~~\text{for all}~~ \boldsymbol y \in \mathcal C\}.
$$
The Euclidean hull of $\mathcal C$ is defined as $Hull(\mathcal C)=\mathcal C \cap \mathcal C^\perp$, which was originally introduced in $1990$ by Assmus et al.\cite{AK} to classify finite projective planes. Extensive research has shown that  hulls of linear codes play an important role in coding theory. Specifically, the hull of a linear code has become a central issue for the complexity of algorithms for computing the automorphism group of a linear code and for checking permutation equivalence of two linear codes \cite{L,L1,SN,SN1,SS}. There has been a large number of studies on the hulls of linear codes, such as \cite{FFL,GY,CH}. Recently, researchers have focused their attention on Euclidean and Hermitian hulls of GRS codes due to the wide applications in quantum communications \cite{CLL,FFL,GY,WY1}.

It is very interesting to consider  some cases of hulls of linear codes. The linear code $\mathcal C$ satisfying $Hull(\mathcal C) = \{0\}$ is called a linear complementary dual (LCD) code, which has been intensively investigated in \cite{M,J,CL,SY}.
The linear code $\mathcal C$ satisfying $Hull(\mathcal C) = \mathcal C$ (resp. $\mathcal C^\perp$) is called a self-orthogonal (resp. dual-containing) code. In particular, the linear code $\mathcal C$ satisfying $\mathcal C=\mathcal C^\perp$ is called a self-dual code. As everyone knows, an $[n,k,d]$ code $\mathcal C$ is self-dual, equivalent to $\dim (Hull(\mathcal C))=\frac{n}{2}$ for an even integer $n$. Considerable research efforts have been devoted to the construction of MDS self-orthogonal and MDS self-dual codes \cite{CH1,CH2,RH,FF,JX,LLL,Y,WY,WY1}.

Lately, Gao et al. \cite{GY} gave a method to explore the Euclidean hulls of GRS codes in terms of Goppa codes. Specifically, if $GRS_k(\boldsymbol \alpha, \boldsymbol v)$ is a $k$-dimensional GRS code over $\Bbb F_q$ associated with $\boldsymbol \alpha$ and $\boldsymbol v$ and $\boldsymbol v=(v_1,\ldots,v_n)$
satisfies $v_i=g(\alpha_i)^{-1}$ for a polynomial $g(z)\in \Bbb F_q[z]$ of degree $k$, $i=1,\ldots,n$, then $GRS_k(\boldsymbol \alpha, \boldsymbol v)$ is called a GRS code via Goppa code. They determined the hulls of GRS codes via Goppa codes and showed the hulls are also GRS codes via Goppa codes. Most recently, by expressing a Reed-Solomon code $RS_k(\boldsymbol \alpha)$ as a rational algebraic geometry code, Chen et al. \cite{CLL} completely determined the dimension of the hull $RS_k(\boldsymbol \alpha)\cap RS_k(\boldsymbol \alpha)^\perp$ in terms of the degree of some polynomials. However, they did not decide the specific form of the hull. Based on \cite{CLL} and \cite{GY}, we shall provide  conditions that  hulls of  GRS codes are also  GRS codes  from algebraic geometry codes, which generalize the results in \cite{GY}. If the conditions are not satisfied, then we shall give a method of linear algebra to find the bases of hulls of GRS codes and give formulas to compute the dimensions of their hulls. Besides, we explain that the conditions are too good to be improved by some examples.
Moreover, we show self-orthogonal or self-dual GRS codes.

This paper is organized as follows. In section 2, we recall some definitions and important results about rational function fields. In section 3, we first represent a GRS code and its dual code as rational algebraic geometric codes, then we provide  conditions that hulls of  GRS codes are also GRS codes. If they are not satisfied, we give a method to find the bases of the hulls and give formulas to compute the dimensions of their hulls. Moreover, we show self-orthogonal and self-dual GRS codes. In section 4, we conclude the paper.

\section{Preliminaries}
\subsection{\small{Rational function field}}\

In this subsection, we shall recall some basic knowledge and important results of rational function fields (in detail, see \cite{S}).

\subsubsection{\small{Rational function field and place}}\

In this paper, we always assume that $F=\Bbb F_q(z)$ is the rational function field over $\Bbb F_q$,   where $z$ is transcendental over $\Bbb F_q$ and $\Bbb F_q$ is called the field of constants of $F/\Bbb F_q$. For an irreducible monic polynomial $p(z)\in \Bbb F_q[z]$, the corresponding valuation ring of $F/\Bbb F_q$ is defined by
\begin{equation} \label{ring p}
\mathcal O_{p(z)}:=\left\{\frac{f(z)}{g(z)} ~\Big|~ f(z), g(z) \in \Bbb F_q[z], p(z) \nmid g(z)\right\};
\end{equation}
and it has a unique maximal ideal $P_{p(z)}=\mathcal O_{p(z)}\setminus \mathcal O_{p(z)}^*$ defined by
\begin{equation} \label{ideal p}
P_{p(z)}=\left\{\frac{f(z)}{g(z)}~\Big|~ f(z), g(z) \in \Bbb F_q[z], p(z)| f(z), p(z) \nmid g(z)\right\},
\end{equation}
where $\mathcal O_{p(z)}^*$ is the group of units of $\mathcal O_{p(z)}$.

%When $p(z)$ is linear, i.e. $p(z)=z-\alpha$ with $\alpha \in \Bbb F_q$, we abbreviate and write $$P_\alpha:=P_{z-\alpha}.$$
There is another valuation ring of $F/\Bbb F_q$, namely
\begin{equation} \label{ring inf}
\mathcal O_\infty:=\left\{\frac{f(z)}{g(z)}~ \Big|~ f(z), g(z) \in \Bbb F_q[z], \deg f(z) \leq \deg g(z) \right\}
\end{equation}
with a unique maximal ideal
\begin{equation} \label{ideal inf}
P_\infty=\left\{\frac{f(z)}{g(z)}~\Big|~ f(z), g(z) \in \Bbb F_q[z], \deg f(z) < \deg g(z)\right\}.
\end{equation}
In (\ref{ideal p}), the maximal ideal $P_{p(z)}=(p(z))$ is a principal ideal and is called a place of $F/\Bbb F_q$, where $p(z)$ is called a prime element for $P_{p(z)}$; in (\ref{ideal inf}), the maximal ideal $P_\infty=(\frac 1x)$ is also a principal ideal and is called the infinite place of $F/\Bbb F_q$, where $\frac 1x$ is called a prime element for $P_\infty$. Denote
$$
\Bbb P_{F}:=\left\{P\mid P ~~\text{is a place of}~~ F/\Bbb F_q\right\}.
$$

In general, for a place $P\in \Bbb P_F$ and a prime element $t$ for $P$, it is well known that every $0\neq x \in F$ has a unique representation $x = t^n u$ with $u \in \mathcal O_P^*$ and $n\in \Bbb Z$. Then we define a discrete valuation of $F/\Bbb F_q$ as
$v_P:F \rightarrow \Bbb Z \cup \{\infty\}$ such that $v_P(x):=n$ and $v_P(0):=\infty$, thus $v_P(xy)=v_P(x)+v_P(y)$ and $v_P(u)=0$ for $u \in \mathcal O_P^*$. For a place $P\in \Bbb P_F$, we have
$\mathcal O_P =\left\{x \in F \mid v_P(x) \geq 0\right\}$,
$\mathcal O_P^* =\left\{x \in F \mid v_P(x)=0\right\}$, and
$P =\left\{x \in F \mid v_{P}(x)>0\right\}$.
For $x\in F$ and $P\in \Bbb P_F$, we say $P$ is a zero of $x$
of order $m$ if $v_P (x) = m > 0$; $P$ is a pole of $x$ of order $m$ if $v_P (x) = -m < 0$.

Let $P$ be a place of $F/\Bbb F_q$ and $\mathcal O_P$ be its valuation ring. Since $P$ is a
maximal ideal, the residue class ring $\mathcal O_P/P$ is a field. Define the residue class map with respect to $P$ as follows:
\begin{eqnarray*}
&F\longrightarrow\mathcal O_P/P \cup \{\infty\}& \\
&x\in \mathcal O_P\longmapsto x(P),& \\
&x\in F\setminus\mathcal O_P\longmapsto\infty,&
\end{eqnarray*}
where $x(P):=x+P$ is the residue class of $x$ modulo $P$.
It is clear that $\mathcal O_P\supseteq \Bbb F_q$ and $\deg P:=[\mathcal O_P/P: \Bbb F_q]$ is called the degree of $P$. A place of degree one of $F/\Bbb F_q$ is called a rational place of $F/\Bbb F_q$. In the rational function field $F$,  $\deg P_{p(z)}=\deg p(z)$, $\deg P_{\infty}=1$, and all rational places are $P_{z-\alpha}$ and $P_\infty$, where $\alpha \in \Bbb F_q$.
\subsubsection{\small{Divisor, the Riemann-Roch space, the Riemann-Roch Theorem}}\

The divisor group $Div(F)$ of $F/\Bbb F_q$ is defined by
$$
Div(F)=\{D \mid D=\sum_{P\in \Bbb P_F}n_P P ~~ \text{with}~~ n_P \in \Bbb Z, ~~\text{almost all} ~~ n_P=0\}.
$$
The zero element of $Div(F)$ is the divisor
$0:=\sum_{P\in \Bbb P_F} r_P P, ~~\text{all}~~ r_P=0.$
The support of a divisor $D$ is defined as $supp(D):=\{P\in \Bbb P_F \mid n_P \neq 0\}$.
For a place $Q\in \Bbb P_F$ and a divisor $D=\sum n_P P\in Div(F)$,  define $v_Q(D):=n_Q$ and
$$D=\sum_{P\in \Bbb P_F}v_P(D)P.$$
A partial ordering on $Div(F)$ is defined by $$D_1\leq D_2 \Longleftrightarrow v_P(D_1)\leq v_P(D_2)\mbox{ for all }P\in \Bbb P_F.$$ A divisor $D\geq 0$ is called positive (or effective). The degree of a divisor $D$ is defined as $\deg D:=\sum_{P\in \Bbb P_F}v_P(D)\deg P$.

By\cite [Corollary 1.3.4]{S}, a nonzero element $x\in F$ has only finitely many zeros and poles in $\Bbb P_F$. Let $0\neq x\in F$ and denote by $Z$ (resp. $N$) the set of zeros (resp. poles) of $x$ in $\Bbb P_F$. Then we define
$$
\begin{aligned}
&(x)_{0}:=\sum_{P \in Z} v_{P}(x) P , ~~\text{the zero divisor of}~~  x, \\
&(x)_{\infty}:=\sum_{P \in N}\left(-v_{P}(x)\right) P , ~~\text{the pole divisor of}~~  x , \\
&(x):=(x)_{0}-(x)_{\infty}=\sum_{P \in \Bbb P_F}v_{P}(x) P , ~~\text{the principal divisor of}~~  x .
\end{aligned}
$$
It is clear that $(x)_{0}\geq 0, (x)_{\infty}\geq 0$. By $v_P(xy)=v_P(x)+v_P(y)$,  $(xy)=(x)+(y)$. Since the elements in $\Bbb F_q$ have no zeros and no poles, there is a characterization of the elements in $\Bbb F_q$:
$$x\in \Bbb F_q \Longleftrightarrow (x)=0.$$
 The set of divisors
$$
Princ(F):=\{(x)\mid 0\ne x\in F\}
$$
is called the group of principal divisors of $F/\Bbb F_q$, which is a subgroup of $Div(F)$.

For two divisors $D_1=\sum_{P\in \Bbb P_F}v_P(D_1)P$ and $D_2=\sum_{P\in \Bbb P_F}v_P(D_2)P$,
we define the following operations:
\begin{eqnarray*}
&&D_1+D_2=\sum_{P\in \Bbb P_F}(v_P(D_1)+v_P(D_2))P; \\
&&D_1\cap D_2=\sum_{P\in \Bbb P_F} \min\{v_P(D_1),v_P(D_2)\}P; \\
&&D_1\cup D_2=\sum_{P\in \Bbb P_F} \max\{v_P(D_1),v_P(D_2)\}P.
\end{eqnarray*}

For a divisor $A\in Div(F)$, define the Riemann-Roch space associated with $A$ as
$$
\mathscr L(A):=\{x\in F \mid (x)\geq -A\}\cup \{0\}=\{x\in F \mid v_P(x)\geq -v_P(A) ~~\text{for all}~~ P\in \Bbb P_F\}.
$$
It is well known that $\mathscr L(A)$ is a vector space over $\Bbb F_q$ and the integer $\ell(A):=\dim \mathscr L(A)$ is called the dimension of the divisor $A$.
In terms of Riemann-Roch space, we list the following properties in \cite{S} for later use:

(1) If $A < 0$, then $\mathscr L(A)=\{0\}$ and $\ell(A)=0$.

(2) All principal divisors have degree zero. More precisely: let $x\in F\setminus \Bbb F_q$ and $(x)_0$ resp. $(x)_\infty$ denote the zero resp. pole divisor of $x$. Then
$$
\deg (x)_0 = \deg (x)_\infty.
$$

(3) If $A'\in Div(F)$ is a divisor satisfying $A'=A+(x)$ for some $x \in F$, then $\mathscr L(A)\cong \mathscr L(A')$, $\ell(A)=\ell(A')$, and $\deg A=\deg A'$.

(4) Let $D_1$ and $D_2$ be two divisors of $F/\Bbb F_q$. Then
$$\mathscr L(D_1)\cap \mathscr L(D_2)=\mathscr L(D_1\cap D_2).$$

Now there is a very important fact to help us understand the above concepts.

For $0 \neq x \in F$,  $x=a\cdot \frac{f(z)}{g(z)}$ with $a \in \Bbb F_q\setminus \{0\}$, where
$f(z),g(z)\in \Bbb F_q[z]$ are  monic and relatively prime, i.e., $\gcd(f(z),g(z))=1$. Let
$$
f(z)=\prod_{i=1}^r p_i(z)^{n_i},g(z)=\prod_{j=1}^s q_j(z)^{m_j}
$$
with pairwise distinct irreducible monic polynomials $p_i(z),q_j(z)\in \Bbb F_q[z]$ and positive integers $n_i,m_j,i=1,\ldots,r,j=1,\ldots,s$.
Then the principal divisor of $x$ in $Div(F)$ appears thus :
$$
(x)=\sum_{i=1}^r n_i P_i-\sum_{j=1}^s m_j Q_j+(\deg g(z)-\deg f(z))P_\infty,
$$
where $P_i$ and $Q_j$ are the places corresponding to $p_i(z)$ and $q_j(z)$, respectively.

One of the most important problems in the theory of algebraic function fields is to calculate the dimension of a divisor. The answer to the problem will be given by the Riemann-Roch Theorem. Since the description of the Riemann-Roch Theorem involves many concepts, we shall introduce the theorem briefly here. Readers interested in this can refer to section 1.5 in \cite{S}.

The genus $g$ of $F/\Bbb F_q$ is defined as
$
g:= \max\{\deg A-\ell(A)+1\mid A\in Div(F)\}.
$
\cite [Example 1.4.18]{S} tells us the rational function field $F$ has genus $g=0$, which is to say the genus $g$ in this paper is always equal to $0$.

\begin{thm}(Riemann-Roch Theorem) \label{RR thm}  (\cite[Theorem  1.5.15]{S})
Let $W$ be a canonical divisor of $F/\Bbb F_q$. Then for each divisor $A\in Div(F)$,
$$
\ell(A)=\deg A+1-g+\ell(W-A).
$$
\end{thm}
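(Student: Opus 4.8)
The plan is to exploit the fact that throughout this paper $F=\Bbb F_q(z)$ has genus $g=0$, so it suffices to prove the formula for the rational function field, where one may take the canonical divisor to be $W=(dz)=-2P_\infty$, the divisor of the differential $dz$ (computed from local expansions: $v_{P_\infty}(dz)=-2$ and $v_P(dz)=0$ at every finite place). Since all divisors of degree $-2$ are linearly equivalent in $\Bbb F_q(z)$ and $\ell(W-A)$ depends only on the class of $W$, there is no loss in this choice. First I would pin down the shape of Riemann-Roch spaces. Given an effective divisor $A\geq 0$, write its finite part as $\sum_{i=1}^r n_iP_{p_i(z)}$, set $h(z)=\prod_{i=1}^r p_i(z)^{n_i}$, $d=\deg h(z)$, and $n_\infty=v_{P_\infty}(A)$. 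Using the description of principal divisors recalled above, one checks directly that
$$
\mathscr L(A)=\Big\{\frac{f(z)}{h(z)}~\Big|~f(z)\in\Bbb F_q[z],\ \deg f(z)\leq d+n_\infty\Big\}\cup\{0\}:
$$
a function lies in $\mathscr L(A)$ precisely when its finite poles are dominated by $A$, which forces its reduced denominator to divide $h(z)$, and when its pole order at $P_\infty$ is at most $n_\infty$, which after clearing to the denominator $h(z)$ bounds the numerator degree by $d+n_\infty$. Hence $\ell(A)=d+n_\infty+1=\deg A+1$ for every effective $A$.

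Next I would drop the effectiveness hypothesis. In $\Bbb F_q(z)$ every divisor $D$ of degree $0$ is principal: if $D=\sum_i n_iP_{p_i(z)}+n_\infty P_\infty$, then $x=\prod_i p_i(z)^{n_i}$ (negative exponents allowed) has $(x)=D$, by applying the degree-zero relation to the formula for $(x)$ given in the excerpt. Consequently, if $\deg A=m\geq 0$ then $A-mP_\infty$ is principal, so $A$ is linearly equivalent to the effective divisor $mP_\infty$; by property (3) of Riemann-Roch spaces, $\ell(A)=\ell(mP_\infty)=m+1=\deg A+1$. Conversely, if $\deg A<0$ and $\mathscr L(A)\neq\{0\}$, then any $0\neq x\in\mathscr L(A)$ gives $A+(x)\geq 0$, hence $\deg A=\deg(A+(x))\geq 0$, a contradiction; so $\ell(A)=0$ whenever $\deg A<0$.

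Finally I would assemble the identity by cases on $m=\deg A$, using $\deg(W-A)=-2-m$. If $m\geq 0$, then $\deg(W-A)<0$, so $\ell(W-A)=0$ and both sides equal $m+1$. If $m=-1$, then $\deg(W-A)=-1<0$, so $\ell(W-A)=0$ and both sides equal $0$. If $m\leq -2$, then $\deg(W-A)=-2-m\geq 0$, so $\ell(W-A)=(-2-m)+1=-1-m$ by the effective case, and the right-hand side is $m+1+(-1-m)=0=\ell(A)$. As a by-product this recovers $g=\max_{A}\bigl(\deg A-\ell(A)+1\bigr)=0$, consistent with the excerpt. The only genuinely substantive inputs are the explicit description of $\mathscr L(A)$ and the triviality of the degree-zero divisor class group of $\Bbb F_q(z)$; for function fields of positive genus this case analysis collapses and one needs the full theory of Weil differentials (equivalently, Serre duality), which is the real obstacle in general and the reason the result is simply quoted from \cite[Theorem 1.5.15]{S}.
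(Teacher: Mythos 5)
The paper offers no proof of this statement at all: it is quoted verbatim from \cite[Theorem 1.5.15]{S}, and the surrounding text only extracts the consequence $\ell(A)=\deg A+1$ for $\deg A>-2$ using $g=0$ and $\deg W=-2$. Your proposal is therefore not comparable to a proof in the paper, but it is a correct, essentially self-contained proof of the theorem in the only case the paper ever uses, namely $F=\Bbb F_q(z)$. The three ingredients you use are all sound: the explicit description of $\mathscr L(A)$ for effective $A$ (giving $\ell(A)=\deg A+1$), the triviality of the degree-zero divisor class group of $\Bbb F_q(z)$ (so that $\ell$ depends only on the degree), and the vanishing $\ell(A)=0$ for $\deg A<0$; the final case analysis on $m=\deg A$ against $\deg(W-A)=-2-m$ checks out in all three ranges. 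One small point worth tightening: you justify replacing a general canonical divisor $W$ by $-2P_\infty$ via ``all divisors of degree $-2$ are linearly equivalent,'' which tacitly assumes $\deg W=-2$ --- a fact usually \emph{deduced from} Riemann--Roch (the paper cites \cite[Corollary 1.5.16]{S} for it). The clean way to avoid this circularity is to invoke instead that any two nonzero Weil differentials differ by multiplication by an element of $F^*$, so all canonical divisors lie in one class and $W\sim(dz)=-2P_\infty$ without knowing the degree in advance. With that repair your argument is complete for genus zero; as you note, the general statement requires the duality theory that the paper sidesteps by citation.
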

In fact, for a canonical divisor $W$, $\deg W=2g-2=-2$ by  \cite [Corollary 1.5.16] {S} and $g=0$. If $A$ is a divisor of $F/\Bbb F_q$ with $\deg A> 2g-2=-2$, then
$\ell(A)=\deg A+1-g=\deg A+1$.
\subsection{\small{Rational algebraic geometry code}}\

In this subsection, we shall recall definitions and properties of rational algebraic  geometric (AG) codes and generalized Reed-Solomon codes (in detail, see \cite{S}).

Let $F=\Bbb F_q(z)$ be the rational function field (genus $g=0$), $P_1,\ldots,P_n$ pairwise distinct places of $F/\Bbb F_q$ of degree one, $D=P_1+\cdots+P_n$, and $G$ a divisor of $F/\Bbb F_q$ such that $supp(G)\cap supp(D)=\emptyset$. Consider the evaluation map
$$
ev_D: \mathscr L(G) \rightarrow \Bbb F_q^n, x \mapsto (x(P_1), \ldots, x(P_n)),
$$
which is a $\Bbb F_q$-linear map with $\ker (ev_D)=\mathscr L(G-D)$.
The rational algebraic geometry code $C_{\mathscr L}(D,G)$ associated with the divisors $D$ and $G$ of the rational function field $F$ is defined by
$$
C_{\mathscr L}(D,G):=\{(x(P_1),\ldots,x(P_n))\mid x\in \mathscr L(G)\}=ev_D(\mathscr L(G)) \subseteq \Bbb F_q^n.
$$
Moreover, $C_{\mathscr L}(D,G)$ is a code with parameters $[n,k,d]$, where
$$
k=\ell(G)-\ell(G-D) \quad\text{and}\quad d\geq n-\deg G.
$$
And if $\deg G<n$ and $\{x_1,\ldots,x_k\}$ is a basis of $\mathscr L(G)$, then
$\dim (\ker (ev_D))=0$, the evaluation $ev_D: \mathscr L(G) \rightarrow C_{\mathscr L}(D,G)$ is injective, and $C_{\mathscr L}(D,G)$ has a generator matrix:
\begin{equation}\label{AGgm}
M=\left(\begin{array}{cccc}
x_1(P_1) & x_1(P_2) & \cdots & x_1(P_n) \\
x_2(P_1) & x_2(P_2) & \cdots & x_2(P_n) \\
\vdots & \vdots & & \vdots \\
x_k(P_1) & x_k(P_2) & \cdots & x_k(P_n)
\end{array}\right).
\end{equation}
\begin{prop} (\cite[Proposition 2.3.2]{S}) \label{prop para}
Let $C=C_{\mathscr L}(D,G)$ be a rational AG code over $\Bbb F_q$ with parameters $n$ and $k$. Then\\
(1) $k=0$ if and only if $\deg G < 0$, and $k=n$ if and only if $\deg G > n-2$.\\
(2) For $0 \leq \deg G \leq n-2$, $k=1+\deg G$.\\
(3) $C^\perp$ is also a rational AG code.
\end{prop}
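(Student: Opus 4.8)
The plan is to handle parts (1) and (2) by a direct application of the Riemann--Roch Theorem in the genus-zero case, and to handle (3) via the standard duality theory for algebraic geometry codes. Throughout I write $k=\ell(G)-\ell(G-D)$ (the dimension formula for $C_{\mathscr L}(D,G)$ recorded in the excerpt) and $\deg D=n$. Two consequences of Riemann--Roch for $F/\Bbb F_q$ with $g=0$ will do all the work for (1) and (2): first, if $\deg A<0$ then $\ell(A)=0$ (otherwise a nonzero $x$ with $(x)\geq -A$ would give a positive divisor $(x)+A$ of degree $\deg A<0$, impossible since principal divisors have degree $0$); second, if $\deg A>-2$ then $\ell(A)=\deg A+1$, as already noted in the excerpt.

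For (1) and (2) I would split into three ranges of $\deg G$. If $\deg G<0$, then both $G$ and $G-D$ have negative degree, so $\ell(G)=\ell(G-D)=0$ and $k=0$. If $0\leq \deg G\leq n-2$, then $\deg G>-2$ gives $\ell(G)=\deg G+1$, while $\deg(G-D)=\deg G-n\leq -2<0$ gives $\ell(G-D)=0$, hence $k=\deg G+1$, a value between $1$ and $n-1$. If $\deg G\geq n-1$, then $\deg(G-D)=\deg G-n\geq -1>-2$, so $\ell(G)=\deg G+1$ and $\ell(G-D)=\deg G-n+1$, whence $k=(\deg G+1)-(\deg G-n+1)=n$. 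Reading these three cases together proves (2) and both equivalences in (1): since $n\geq 1$, one gets $k=0\iff\deg G<0$, and $k=n\iff\deg G\geq n-1\iff\deg G>n-2$.

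For (3), the idea is to identify $C^\perp$ with a differential code and then rewrite that code as an $\mathscr L$-code. By the standard duality statement for AG codes, $C_{\mathscr L}(D,G)^\perp=C_\Omega(D,G)$, the code obtained from the Weil differentials $\eta$ with $(\eta)\geq G-D$ by taking the tuple of residues at $P_1,\ldots,P_n$ (see \cite[Theorem 2.2.8]{S}). Next one uses the existence of a Weil differential $\eta$ of $F/\Bbb F_q$ with $v_{P_i}(\eta)=-1$ and residue $1$ at each $P_i$ (\cite[Proposition 2.2.10]{S}); for such $\eta$ one has $C_\Omega(D,G)=C_{\mathscr L}(D,H)$ with $H:=D-G+(\eta)$. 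Since $v_{P_i}(H)=1-0+(-1)=0$ for each $i$, the support of $H$ is disjoint from that of $D$, so $C_{\mathscr L}(D,H)$ is a legitimately defined AG code attached to the same rational function field $F$; therefore $C^\perp$ is again a rational AG code.

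I expect part (3) to be the only real obstacle: parts (1) and (2) are a bookkeeping exercise once Riemann--Roch and $g=0$ are in hand, whereas (3) rests on the deeper fact that the dual of an $\mathscr L$-code is a differential code and that differential codes over a rational function field can be realized as $\mathscr L$-codes --- equivalently, on the existence of the Weil differential $\eta$ above. In the rational case one can also bypass differentials entirely: translating into the language of generalized Reed--Solomon codes (as will be done in Section 3), the identity $GRS_k(\boldsymbol\alpha,\boldsymbol v)^\perp=GRS_{n-k}(\boldsymbol\alpha,\boldsymbol v')$ with $v_i'=\big(v_i\prod_{j\neq i}(\alpha_i-\alpha_j)\big)^{-1}$ exhibits the dual concretely as another GRS code, hence another rational AG code, while the degenerate cases $\deg G<0$ (where $C=\{0\}$, $C^\perp=\Bbb F_q^n$) and $\deg G>n-2$ (where $C=\Bbb F_q^n$, $C^\perp=\{0\}$) are settled by choosing any divisor of the appropriate degree with support away from $D$.
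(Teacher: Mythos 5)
Your proof is correct; the paper itself gives no argument for this proposition (it is quoted verbatim from \cite[Proposition 2.3.2]{S}), and your treatment --- Riemann--Roch bookkeeping in genus zero for (1) and (2), and the identification $C_{\mathscr L}(D,G)^\perp=C_{\mathscr L}(D,D-G+(\eta))$ via the Weil differential with $v_{P_i}(\eta)=-1$ and residue $1$ for (3) --- is exactly the standard proof in that reference and matches the machinery the paper sets up in Proposition \ref{AG dual} and Lemma \ref{wd}. The only quibble is bibliographic: the existence of the explicit differential in the rational case is \cite[Lemma 2.3.6]{S} (the paper's Lemma \ref{wd}), while \cite[Proposition 2.2.10]{S} is the duality statement itself; this does not affect the mathematics.
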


\begin{prop}(\cite[Proposition 2.2.10]{S})\label{AG dual}
Let $C_{\mathscr L}(D,G)$ be a rational AG code. Then
$$
C_{\mathscr L}(D,G)^\perp=C_{\mathscr L}(D,D-G+(\eta)),
$$
where $\eta$ is a Weil differential (see \cite[Definition 1.5.6]{S}) such that $v_{P_i}(\eta)=-1$ and $\eta_{P_i}(1)=1$ for $i=1,\ldots,n$.
\end{prop}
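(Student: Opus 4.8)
The plan is to run the classical two-step argument via Weil differentials. For a divisor $A\in Div(F)$ write $\Omega_F(A):=\{\eta\mid \eta\text{ a Weil differential of }F/\Bbb F_q,\ (\eta)\geq A\}\cup\{0\}$, and consider the \emph{residue evaluation map} $ev_D^{\Omega}\colon\Omega_F(G-D)\to\Bbb F_q^n$, $\eta\mapsto(\eta_{P_1}(1),\dots,\eta_{P_n}(1))$, whose image is the \emph{differential code} $C_{\Omega}(D,G)$. The target is to prove (i) $C_{\Omega}(D,G)=C_{\mathscr L}(D,G)^\perp$ and (ii) $C_{\Omega}(D,G)=C_{\mathscr L}(D,D-G+(\eta))$ for a suitable $\eta$; together these give the statement.

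For (i) I would first get the inclusion $C_{\Omega}(D,G)\subseteq C_{\mathscr L}(D,G)^\perp$ from the Residue Theorem. Given $x\in\mathscr L(G)$ and $\eta\in\Omega_F(G-D)$, the hypothesis $supp(G)\cap supp(D)=\emptyset$ gives $v_{P_i}(G)=0$, so $v_{P_i}(\eta)\geq -1$ and $v_{P_i}(x)\geq 0$; hence the local component satisfies $\eta_{P_i}(x)=x(P_i)\,\eta_{P_i}(1)$, while for $P\notin\{P_1,\dots,P_n\}$ the differential $x\eta$ is regular at $P$ and $\eta_P(x)=0$. Therefore $\langle ev_D(x),ev_D^{\Omega}(\eta)\rangle=\sum_{i=1}^n\eta_{P_i}(x)=\sum_{P\in\Bbb P_F}\eta_P(x)=0$. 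Equality then follows from a dimension count: $\ker ev_D^{\Omega}=\Omega_F(G)$, so $\dim C_{\Omega}(D,G)=\dim\Omega_F(G-D)-\dim\Omega_F(G)=\ell(W-G+D)-\ell(W-G)$ for a canonical divisor $W$, and subtracting the two instances of the Riemann-Roch Theorem \ref{RR thm} at $G$ and at $G-D$ yields $\dim C_{\Omega}(D,G)=n-\ell(G)+\ell(G-D)=n-k=\dim C_{\mathscr L}(D,G)^\perp$.

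For (ii) I would fix a Weil differential $\eta$ with $v_{P_i}(\eta)=-1$ and $\eta_{P_i}(1)=1$ for all $i$. In the rational function field this can be produced explicitly: after a change of variable we may assume each $P_i=P_{z-\alpha_i}$ is finite, and if $h(z)$ is the polynomial of degree $<n$ with $h(\alpha_i)=\prod_{j\neq i}(\alpha_i-\alpha_j)$, then $\eta=\frac{h(z)\,dz}{\prod_{i=1}^n(z-\alpha_i)}$ has $v_{P_i}(\eta)=-1$ and $\res_{P_i}(\eta)=\eta_{P_i}(1)=1$. Since the Weil differentials form a one-dimensional $F$-space, every $\eta'\in\Omega_F(G-D)$ equals $x\eta$ for a unique $x\in F$, and from $(\eta')=(x)+(\eta)$ one reads off $\eta'\in\Omega_F(G-D)\iff x\in\mathscr L(D-G+(\eta))$. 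For such $x$, $v_{P_i}(D-G+(\eta))=1-0-1=0$ forces $x\in\mathcal O_{P_i}$, so $\eta'_{P_i}(1)=\eta_{P_i}(x)=x(P_i)$ and $ev_D^{\Omega}(x\eta)=ev_D(x)$. Hence $C_{\Omega}(D,G)=\{(x(P_1),\dots,x(P_n))\mid x\in\mathscr L(D-G+(\eta))\}=C_{\mathscr L}(D,D-G+(\eta))$, which with (i) completes the argument.

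The dimension bookkeeping and the valuation estimates are routine. The genuine inputs are the Residue Theorem and two local facts about Weil differentials: $\eta_P(x)=x(P)\eta_P(1)$ for $x\in\mathcal O_P$ when $v_P(\eta)\geq -1$, and $\eta_P(1)=0\iff v_P(\eta)\geq 0$ in that range. The one step I expect to require real care is securing a differential $\eta$ that simultaneously realises the normalisation $v_{P_i}(\eta)=-1$, $\eta_{P_i}(1)=1$ at all $n$ places at once; the explicit Lagrange construction settles the rational case, while in general it rests on the Approximation Theorem.
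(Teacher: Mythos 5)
Your argument is correct, but note that the paper offers no proof of this statement at all: it is imported verbatim from \cite[Proposition 2.2.10]{S}, and what you have written is precisely the standard proof from that reference (the residue-theorem inclusion $C_{\Omega}(D,G)\subseteq C_{\mathscr L}(D,G)^{\perp}$ plus the Riemann--Roch dimension count, followed by the identification $C_{\Omega}(D,G)=C_{\mathscr L}(D,D-G+(\eta))$ via the one-dimensionality of the module of Weil differentials over $F$). Your explicit normalised differential in the rational case is exactly the $\eta$ of Lemma \ref{wd} with $(\eta)=(h'(z))-(h(z))-2P_{\infty}$, so there is nothing to add or correct.
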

\begin{lem}(\cite[Lemma 2.3.6]{S})\label{wd}
Consider the rational function field $F=\Bbb F_q(z)$ and $n$ distinct elements $\alpha_1,\ldots, \alpha_n \in \Bbb F_q$. Let $P_i:=P_{z-\alpha_i}$ be the rational places of $z-\alpha_i$ and $h(z)=\prod_{i=1}^n(z-\alpha_i)$. Then there exists a Weil differential $\eta$ of $F/\Bbb F_q$ with
$$
(\eta)=(h'(z))-(h(z))-2P_{\infty}
$$
satisfying
$$
v_{P_i}(\eta)=-1 \quad\text{and} \quad \eta_{P_i}(1)=1, i=1,\ldots,n,
$$
where $h'(z) \in \Bbb F_q[z]$ is the derivative of the polynomial $h(z)$.
\end{lem}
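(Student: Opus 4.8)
The plan is to exhibit the required Weil differential concretely as the logarithmic differential
$$\eta := \frac{dh(z)}{h(z)} = \frac{h'(z)}{h(z)}\,dz,$$
and then to verify the three assertions (its divisor, the valuations $v_{P_i}(\eta)=-1$, and the normalizations $\eta_{P_i}(1)=1$) by local computations at each place. Observe first that $\eta\neq 0$: since $\alpha_1,\dots,\alpha_n$ are pairwise distinct, $h(z)$ is separable, so $h'(z)\neq 0$ and indeed $\gcd(h(z),h'(z))=1$; in particular $h$ has a simple zero at each $\alpha_i$ and $h'(\alpha_i)\neq 0$.

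The first step is to record that in $F=\Bbb F_q(z)$ one has $(dz)=-2P_\infty$. At a finite rational place $P_{z-\alpha}$ the element $z-\alpha$ is a prime element and $dz=d(z-\alpha)$, hence $v_{P_{z-\alpha}}(dz)=0$; at $P_\infty$ the element $t=1/z$ is a prime element and $z=t^{-1}$ gives $dz=-t^{-2}\,dt$, hence $v_{P_\infty}(dz)=-2$ (this is consistent with $\deg(dz)=2g-2=-2$). Since the divisor of a differential satisfies $(f\cdot dz)=(f)+(dz)$, we obtain
$$(\eta)=\left(\tfrac{h'(z)}{h(z)}\right)+(dz)=(h'(z))-(h(z))-2P_\infty,$$
which is the asserted divisor.

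For the two local assertions the plan is to use the partial-fraction identity $\dfrac{h'(z)}{h(z)}=\sum_{j=1}^{n}\dfrac{1}{z-\alpha_j}$, valid because $h(z)=\prod_{j=1}^{n}(z-\alpha_j)$ has simple roots, so that $\eta=\sum_{j=1}^{n}\dfrac{dz}{z-\alpha_j}$. Fix $i\in\{1,\dots,n\}$ and take $t=z-\alpha_i$ as a prime element of $P_i$. Then the $j=i$ summand is exactly $\dfrac{dt}{t}$, which has $v_{P_i}=-1$ and residue $1$ at $P_i$, while for $j\neq i$ the summand $\dfrac{dz}{z-\alpha_j}$ is regular at $P_i$. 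Hence $v_{P_i}(\eta)=-1$, and using that the local component of a Weil differential is computed by the residue map, together with additivity of residues, $\eta_{P_i}(1)=\res_{P_i}(\eta)=\res_{P_i}\!\big(\tfrac{dz}{z-\alpha_i}\big)=1$.

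The main obstacle—really the only non-formal ingredient—is the dictionary between the abstract Weil differentials appearing in Proposition~\ref{AG dual} (adèle-class functionals) and concrete differential forms such as $dh/h$, in particular the fact that the local pairing $\eta_P(\cdot)$ is realized by the residue map $\res_P$ on the rational function field; both are part of the theory of Weil differentials in \cite[Ch.~1]{S}. Once that dictionary is granted, the computation above is elementary, and it is internally consistent since $\eta_{P_i}(1)=1\neq 0$ already forces $v_{P_i}(\eta)=-1$.
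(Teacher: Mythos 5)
The paper does not prove this lemma at all -- it is imported verbatim from \cite[Lemma 2.3.6]{S} -- so there is no internal proof to compare against. Your argument is correct and is essentially the standard textbook one: take $\eta=\frac{h'(z)}{h(z)}\,dz$, compute $(\eta)=(h'(z))-(h(z))-2P_\infty$ from $(dz)=-2P_\infty$, and read off $v_{P_i}(\eta)=-1$ and $\eta_{P_i}(1)=1$ from the partial-fraction identity $h'/h=\sum_j (z-\alpha_j)^{-1}$ (valid in any characteristic since $h$ is separable). The only point you rightly flag as borrowed is the identification of the local component $\eta_{P}(x)$ of the Weil differential with $\operatorname{Res}_P(x\,\eta)$; in \cite{S} this dictionary is established for the rational function field before Lemma 2.3.6 is proved, so deferring it is legitimate rather than a gap.
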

\begin{rem}
In this paper, we always assume that $\gcd (n,q)=1$, so that $\deg h'(z)=n-1$  to omit unnecessary discussion.
\end{rem}
 There is an important result in the following   research.
\begin{lem}(\cite[Lemma 7]{CLL}) \label{chen}
Let $G_1$ and $G_2$ be two divisors of the rational function field $F$ satisfying $\deg(G_1\cap G_2)> -2$.
Let $D=P_1+\cdots+P_n$ be a divisor of $F/\Bbb F_q$, where $P_1,\ldots,P_n$ are pairwise distinct places of $F/\Bbb F_q$ of degree one. Then \\
(1) $\mathscr L(G_1)+\mathscr L(G_2)=\mathscr L(G_1\cup G_2)$. \\
(2) If $supp(G_i)\cap supp(D)=\emptyset, ~~ i=1,2$, then
$$
C_{\mathscr L}(D,G_1)+C_{\mathscr L}(D,G_2)=C_{\mathscr L}(D,G_1\cup G_2).
$$
\end{lem}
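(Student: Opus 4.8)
The plan is to prove part (1) by an exact dimension count of $\Bbb F_q$-vector spaces and then to obtain part (2) simply by applying the evaluation map $ev_D$ to the equality of Riemann-Roch spaces from (1). First I would record the elementary facts about the lattice operations on divisors: working place by place and using $\min\{a,b\}\le a,b\le\max\{a,b\}$ together with $\min\{a,b\}+\max\{a,b\}=a+b$, one gets $G_1\cap G_2\le G_i\le G_1\cup G_2$ for $i=1,2$, the identity $(G_1\cap G_2)+(G_1\cup G_2)=G_1+G_2$, and hence both $\deg(G_1\cap G_2)+\deg(G_1\cup G_2)=\deg G_1+\deg G_2$ and $\deg(G_1\cap G_2)\le\deg G_i\le\deg(G_1\cup G_2)$. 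The inclusion $\mathscr L(G_1)+\mathscr L(G_2)\subseteq\mathscr L(G_1\cup G_2)$ is then immediate, since $x\in\mathscr L(G_i)$ means $(x)\ge -G_i\ge -(G_1\cup G_2)$ and the right-hand side is a vector space.

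For the reverse inclusion I would compare dimensions. By property (4) of Riemann-Roch spaces recalled above, $\mathscr L(G_1)\cap\mathscr L(G_2)=\mathscr L(G_1\cap G_2)$, so the dimension formula $\dim(U+V)=\dim U+\dim V-\dim(U\cap V)$ gives $\dim\bigl(\mathscr L(G_1)+\mathscr L(G_2)\bigr)=\ell(G_1)+\ell(G_2)-\ell(G_1\cap G_2)$. Here the hypothesis $\deg(G_1\cap G_2)>-2$ enters: combined with the degree inequalities above it forces each of $G_1,G_2,G_1\cap G_2,G_1\cup G_2$ to have degree $>-2=2g-2$, so the Riemann-Roch Theorem with $g=0$ yields $\ell(A)=\deg A+1$ for each of them. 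Substituting these values and using $\deg(G_1\cap G_2)+\deg(G_1\cup G_2)=\deg G_1+\deg G_2$ collapses the count to $\dim\bigl(\mathscr L(G_1)+\mathscr L(G_2)\bigr)=\deg(G_1\cup G_2)+1=\ell(G_1\cup G_2)$. Since the inclusion is between finite-dimensional spaces of equal dimension, it is an equality, which proves (1).

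For (2) I would first note that $supp(G_1\cup G_2)\subseteq supp(G_1)\cup supp(G_2)$, so the two disjointness hypotheses also give $supp(G_1\cup G_2)\cap supp(D)=\emptyset$; consequently $\mathscr L(G_1)$, $\mathscr L(G_2)$ and $\mathscr L(G_1\cup G_2)$ all consist of functions regular at $P_1,\ldots,P_n$ and lie in the domain of the single $\Bbb F_q$-linear evaluation map $ev_D$. Then the linearity of $ev_D$ together with part (1) gives
$$C_{\mathscr L}(D,G_1)+C_{\mathscr L}(D,G_2)=ev_D\bigl(\mathscr L(G_1)\bigr)+ev_D\bigl(\mathscr L(G_2)\bigr)=ev_D\bigl(\mathscr L(G_1)+\mathscr L(G_2)\bigr)=ev_D\bigl(\mathscr L(G_1\cup G_2)\bigr)=C_{\mathscr L}(D,G_1\cup G_2),$$
as desired.

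I expect the only genuinely delicate point to be the exact use of the hypothesis $\deg(G_1\cap G_2)>-2$: it is precisely what guarantees that Riemann-Roch delivers the exact value $\ell(A)=\deg A+1$ for all four divisors involved, rather than merely the inequality $\ell(A)\ge\deg A+1$. If one only had this inequality, the dimension count would run in the wrong direction and the argument would collapse; one really needs $G_1\cap G_2$ (hence automatically $G_1,G_2,G_1\cup G_2$) to lie above the bound $2g-2$. Everything else — the lattice identities for divisors and the manipulation of the linear map $ev_D$ — is routine.
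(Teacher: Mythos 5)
Your proof is correct. Note that the paper itself gives no proof of this lemma -- it is quoted verbatim from \cite[Lemma 7]{CLL} -- and your argument (the inclusion $\mathscr L(G_1)+\mathscr L(G_2)\subseteq\mathscr L(G_1\cup G_2)$ plus the Riemann--Roch dimension count using $\mathscr L(G_1)\cap\mathscr L(G_2)=\mathscr L(G_1\cap G_2)$ and $\deg(G_1\cap G_2)+\deg(G_1\cup G_2)=\deg G_1+\deg G_2$, followed by applying the linear map $ev_D$ for part (2)) is precisely the standard proof of that cited result, with the hypothesis $\deg(G_1\cap G_2)>-2=2g-2$ used exactly where it must be.
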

Now we recall the definition of a generalized Reed-Solomon code, which is closely related to a rational algebraic geometric code. %and represent generalized Reed-Solomon code and its dual code as rational algebraic geometric codes.
\begin{defn}\label{GRS}
Let $n$ be a positive integer with $1<n<q$. Let $\boldsymbol \alpha=(\alpha_1,\ldots, \alpha_n)$, where $\alpha_i$ are distinct elements of $\Bbb F_q$, and $\boldsymbol v=(v_1,\ldots, v_n)$, where $v_i$ are nonzero (not necessarily distinct) elements of $\Bbb F_q$, $i=1,\ldots,n$. For $1\leq k< n$, the generalized Reed-Solomon (GRS for short) code is defined by
\begin{equation} \label{GRS eq}
GRS_k(\boldsymbol \alpha, \boldsymbol v)=\{(v_1f(\alpha_1),\ldots,v_nf(\alpha_n)):f(z) \in \Bbb F_q[z], \deg f(z)\leq k-1\}.
\end{equation}

If $v_i=1$ for $i=1,\ldots,n$, then $GRS_k(\boldsymbol \alpha, \boldsymbol v)$ is called a Reed-Solomon (RS) code. It is clear that $GRS_k(\boldsymbol \alpha, \boldsymbol v)$ has a generator matrix as follows:
\begin{equation}\label{GRS gm}
\left(\begin{array}{cccc}
v_1 & v_2 & \cdots & v_n \\
v_1\alpha_1 & v_2\alpha_2 & \cdots & v_n\alpha_n \\
\vdots & \vdots & & \vdots \\
v_1\alpha_1^{k-1} & v_2\alpha_2^{k-1} & \cdots & v_n\alpha_n^{k-1}
\end{array}\right).
\end{equation}
It is well known that $GRS_k(\boldsymbol \alpha, \boldsymbol v)$ is an $[n,k,n-k+1]$ MDS code. In fact,  all rational AG codes over $\Bbb F_q$ of length $n\leq q$ are GRS codes \cite [Proposition 2.3.3 (a)]{S} .
\end{defn}

\begin{lem} \label{dual}
$$
GRS_k(\boldsymbol \alpha, \boldsymbol v)\cap GRS_k(\boldsymbol \alpha, \boldsymbol v)^\perp=(GRS_k(\boldsymbol \alpha, \boldsymbol v)+GRS_k(\boldsymbol \alpha, \boldsymbol v)^\perp)^\perp.
$$
\end{lem}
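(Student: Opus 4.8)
The identity to be established is the general fact that for any linear code $\mathcal C$ over $\Bbb F_q$,
$$
\mathcal C \cap \mathcal C^\perp = (\mathcal C + \mathcal C^\perp)^\perp,
$$
applied to $\mathcal C = GRS_k(\boldsymbol\alpha,\boldsymbol v)$. The plan is to invoke the two standard properties of the Euclidean dual operation: first, that dualizing reverses inclusions and is an involution, i.e. $(\mathcal C^\perp)^\perp = \mathcal C$; and second, that for any two linear codes $\mathcal C_1, \mathcal C_2 \subseteq \Bbb F_q^n$ one has the De~Morgan-type relation $(\mathcal C_1 + \mathcal C_2)^\perp = \mathcal C_1^\perp \cap \mathcal C_2^\perp$, where $\mathcal C_1 + \mathcal C_2$ denotes the sum (span of the union) of the two codes.

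First I would prove the relation $(\mathcal C_1 + \mathcal C_2)^\perp = \mathcal C_1^\perp \cap \mathcal C_2^\perp$ directly from the definition of the dual code via the Euclidean inner product. A vector $\boldsymbol x$ lies in $(\mathcal C_1 + \mathcal C_2)^\perp$ if and only if $\langle \boldsymbol x, \boldsymbol y \rangle = 0$ for every $\boldsymbol y \in \mathcal C_1 + \mathcal C_2$; since every such $\boldsymbol y$ is a sum $\boldsymbol y_1 + \boldsymbol y_2$ with $\boldsymbol y_i \in \mathcal C_i$ and the inner product is bilinear, this is equivalent to requiring $\langle \boldsymbol x, \boldsymbol y_1 \rangle = 0$ for all $\boldsymbol y_1 \in \mathcal C_1$ and $\langle \boldsymbol x, \boldsymbol y_2 \rangle = 0$ for all $\boldsymbol y_2 \in \mathcal C_2$ (using that each $\mathcal C_i \subseteq \mathcal C_1 + \mathcal C_2$ and that $\boldsymbol 0 \in \mathcal C_j$), i.e. $\boldsymbol x \in \mathcal C_1^\perp \cap \mathcal C_2^\perp$.

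Next I would apply this with $\mathcal C_1 = GRS_k(\boldsymbol\alpha,\boldsymbol v)$ and $\mathcal C_2 = GRS_k(\boldsymbol\alpha,\boldsymbol v)^\perp$, obtaining
$$
\bigl(GRS_k(\boldsymbol\alpha,\boldsymbol v) + GRS_k(\boldsymbol\alpha,\boldsymbol v)^\perp\bigr)^\perp = GRS_k(\boldsymbol\alpha,\boldsymbol v)^\perp \cap \bigl(GRS_k(\boldsymbol\alpha,\boldsymbol v)^\perp\bigr)^\perp.
$$
Finally, using the involutivity $\bigl(GRS_k(\boldsymbol\alpha,\boldsymbol v)^\perp\bigr)^\perp = GRS_k(\boldsymbol\alpha,\boldsymbol v)$ and the commutativity of intersection, the right-hand side becomes $GRS_k(\boldsymbol\alpha,\boldsymbol v) \cap GRS_k(\boldsymbol\alpha,\boldsymbol v)^\perp = Hull(GRS_k(\boldsymbol\alpha,\boldsymbol v))$, which completes the proof.

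There is essentially no obstacle here: the statement is a purely formal consequence of the linear-algebraic properties of the orthogonal complement with respect to a non-degenerate bilinear form, and nothing specific to GRS codes or to the rational function field machinery is needed. The only point requiring minimal care is the involutivity $(\mathcal C^\perp)^\perp = \mathcal C$, which holds because $\dim \mathcal C + \dim \mathcal C^\perp = n$ together with the obvious inclusion $\mathcal C \subseteq (\mathcal C^\perp)^\perp$ forces equality. This lemma is stated here because it will later let us compute the hull by instead computing the (easier) sum $GRS_k(\boldsymbol\alpha,\boldsymbol v) + GRS_k(\boldsymbol\alpha,\boldsymbol v)^\perp$ as a rational AG code via Lemma~\ref{chen} and then dualizing.
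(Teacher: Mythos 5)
Your proof is correct, but it takes a slightly different route from the paper's. You first establish the general De Morgan identity $(\mathcal C_1+\mathcal C_2)^\perp=\mathcal C_1^\perp\cap\mathcal C_2^\perp$ directly from the definition of the Euclidean dual, then specialize to $\mathcal C_1=GRS_k(\boldsymbol\alpha,\boldsymbol v)$, $\mathcal C_2=GRS_k(\boldsymbol\alpha,\boldsymbol v)^\perp$ and finish with the involution $(\mathcal C^\perp)^\perp=\mathcal C$. The paper instead proves only the inclusion $(GRS_k(\boldsymbol\alpha,\boldsymbol v)+GRS_k(\boldsymbol\alpha,\boldsymbol v)^\perp)^\perp\subseteq GRS_k(\boldsymbol\alpha,\boldsymbol v)\cap GRS_k(\boldsymbol\alpha,\boldsymbol v)^\perp$ (by dualizing the two obvious inclusions of each summand into the sum) and then forces equality by a dimension count, using $\dim U+\dim U^\perp=n$ and the formula $\dim(U+V)=\dim U+\dim V-\dim(U\cap V)$. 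The two arguments are essentially equivalent in depth---both ultimately rest on the non-degeneracy of the Euclidean form via the dimension formula, which you invoke inside the proof of involutivity and the paper invokes in its explicit count---but your version isolates a reusable general identity and avoids any arithmetic with dimensions, while the paper's version avoids having to state or prove the double-dual property as a separate step. Either is acceptable; your closing remark about why the lemma is useful (computing the hull via the sum and Lemma~\ref{chen}) correctly identifies its role in the paper.
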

\begin{proof}
It is clear that $GRS_k(\boldsymbol \alpha, \boldsymbol v)\subseteq GRS_k(\boldsymbol \alpha, \boldsymbol v)+GRS_k(\boldsymbol \alpha, \boldsymbol v)^\perp$, then
$$
(GRS_k(\boldsymbol \alpha, \boldsymbol v)+GRS_k(\boldsymbol \alpha, \boldsymbol v)^\perp)^\perp\subseteq GRS_k(\boldsymbol \alpha, \boldsymbol v)^\perp.
$$
Similarly,
$$
(GRS_k(\boldsymbol \alpha, \boldsymbol v)+GRS_k(\boldsymbol \alpha, \boldsymbol v)^\perp)^\perp\subseteq GRS_k(\boldsymbol \alpha, \boldsymbol v).
$$
Then
$$(GRS_k(\boldsymbol \alpha, \boldsymbol v)+GRS_k(\boldsymbol \alpha, \boldsymbol v)^\perp)^\perp\subseteq GRS_k(\boldsymbol \alpha, \boldsymbol v)\cap GRS_k(\boldsymbol \alpha, \boldsymbol v)^\perp.$$

On the other hand,
$$\begin{aligned}
&\dim (GRS_k(\boldsymbol \alpha, \boldsymbol v)+GRS_k(\boldsymbol \alpha, \boldsymbol v)^\perp)^\perp
=n-\dim (GRS_k(\boldsymbol \alpha, \boldsymbol v)+GRS_k(\boldsymbol \alpha, \boldsymbol v)^\perp)\\
&=\dim(GRS_k(\boldsymbol \alpha, \boldsymbol v))+
\dim(GRS_k(\boldsymbol \alpha, \boldsymbol v)^\perp)-\dim (GRS_k(\boldsymbol \alpha, \boldsymbol v)+GRS_k(\boldsymbol \alpha, \boldsymbol v)^\perp)\\
&=\dim(GRS_k(\boldsymbol \alpha, \boldsymbol v)\cap GRS_k(\boldsymbol \alpha, \boldsymbol v)^\perp).\end{aligned}$$
Therefore, we obtain the desired result.
\end{proof}

Chen et al.\cite{CLL} expressed a $k$-dimensional Reed-Solomon (RS) code $RS_k(\boldsymbol \alpha)$ over $\Bbb F_q$ associated with $\boldsymbol \alpha=(\alpha_1,\ldots,\alpha_n)$ as a rational algebraic geometry code, and they completely determined the dimension of the hull $RS_k(\boldsymbol \alpha)\cap RS_k(\boldsymbol \alpha)^\perp$ in terms of the degree of the derivative of $h(z)$ and some relevant polynomials.
 Gao et al.\cite{GY} gave  a necessary and sufficient condition that the dual code of a GRS code via Goppa code is also a GRS code via Goppa code. Under the above condition, they showed that hulls of GRS codes via Goppa codes are also GRS codes via Goppa codes.
Naturally, we shall consider the following questions.

{\bf Questions:}

(1) What is the hull of a GRS code denoted by a rational algebraic geometry code?

(2) What is the condition that the hull of a GRS code denoted by a rational algebraic geometry code is also a GRS code?

In this paper, we shall provide conditions that hulls of GRS codes denoted by rational algebraic geometry codes are also GRS codes. If they are not satisfied, we shall give a method of linear algebra to find the bases of hulls of GRS codes.
\section{Hulls of generalized Reed-Solomon Codes}

In this section, we first represent a GRS code and its dual code as rational algebraic geometry codes. After that, we shall provide conditions that hulls of GRS codes are also GRS codes from algebraic geometric codes. If the conditions are not satisfied, then we shall give a method of linear algebra to find the bases of hulls of GRS codes and give formulas to compute  their dimensions. Besides, we explain that the conditions are too good to be improved by some examples.

%and explain that they are too good to be improved by some examples. Moreover, we obtain some conditions to construct self-orthogonal GRS codes and self-dual GRS codes.

In this paper, we always assume the following notations:
\begin{center}
\begin{tabular}{ll}

  $\Bbb F_q^*=\Bbb F_q\backslash \{0\}$ consists of all  non-zero elements of $\Bbb F_q$.\\
  $F=\Bbb F_q(z)$ is the rational function field over $\Bbb F_q$. \\
  $\boldsymbol v=(v_1,\ldots, v_n)$, where $v_i$, $i=1,\ldots,n$, are nonzero elements of $\Bbb F_q$. \\
  $\boldsymbol \alpha=(\alpha_1,\ldots, \alpha_n)$, where $\alpha_i$, $i=1,\ldots,n$, are distinct elements of $\Bbb F_q$. \\
  $h(z)=\prod_{i=1}^n(z-\alpha_i)\in \Bbb F_q[z]$. \\
  $h'(z) \in \Bbb F_q[z]$ is the derivative of the polynomial $h(z)$.\\
  $P_i:=P_{z-\alpha_i}$are the rational places of the irreducible polynomials $z-\alpha_i$,  $i=1,\ldots,n$.  \\
  $P_{\infty}$ is the infinite place of $F/\Bbb F_q$ defined by (\ref{ideal inf}). \\
  $D=P_1+\cdots+P_n=(h(z))+nP_{\infty} \in Div(F)$. \\
 \end{tabular}
\end{center}
\begin{prop}\label{GRS AG}
 Let a GRS code $GRS_k(\boldsymbol \alpha, \boldsymbol v)$ be defined by Definition \ref{GRS}. Then the $k$-dimensional GRS code $GRS_k(\boldsymbol \alpha, \boldsymbol v)$ can be represented by a rational algebraic geometry code as follows:
$$
GRS_k(\boldsymbol \alpha, \boldsymbol v)=C_{\mathscr L}(D,(k-1)P_{\infty}+(s(z))),
$$
where $s(z)\in \Bbb F_q[z]$ is a polynomial with $0\leq\deg s(z)< n$ and $s(\alpha_i)=\frac 1{v_i} \in \Bbb F_q^*$, $i=1,\ldots,n$.
\end{prop}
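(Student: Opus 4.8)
The plan is to build an explicit linear bijection between the defining polynomial space of $GRS_k(\boldsymbol \alpha, \boldsymbol v)$ and a Riemann-Roch space, and then compare the two codes coordinate by coordinate. First I would produce $s(z)$: since $\alpha_1,\ldots,\alpha_n$ are distinct and $\frac{1}{v_1},\ldots,\frac{1}{v_n}\in\Bbb F_q^*$, Lagrange interpolation yields a unique $s(z)\in\Bbb F_q[z]$ with $\deg s(z)<n$ and $s(\alpha_i)=\frac{1}{v_i}$ for each $i$; it is a nonzero polynomial because its prescribed values are nonzero, so $0\le\deg s(z)<n$. In particular $z-\alpha_i\nmid s(z)$ for every $i$, hence $P_i\notin supp((s(z)))$; together with $P_\infty\notin supp(D)$ this gives $supp(G)\cap supp(D)=\emptyset$ for $G:=(k-1)P_\infty+(s(z))$, so the code $C_{\mathscr L}(D,G)$ is well defined.

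The main step is to describe $\mathscr L(G)$ explicitly. For $0\ne x\in F$ put $y:=x\,s(z)$, so $(y)=(x)+(s(z))$ and consequently $(x)\ge -G$ is equivalent to $(y)\ge -(k-1)P_\infty$, i.e. $v_P(y)\ge 0$ for every place $P\ne P_\infty$ and $v_{P_\infty}(y)\ge -(k-1)$. By the structure of principal divisors in $F=\Bbb F_q(z)$ recalled above, having $v_P(y)\ge 0$ at all finite places forces $y\in\Bbb F_q[z]$, and then $v_{P_\infty}(y)=-\deg y$, so the two conditions together say exactly that $y$ is a polynomial with $\deg y\le k-1$. Thus
$$
\mathscr L(G)=\left\{\frac{y(z)}{s(z)}~\Big|~y(z)\in\Bbb F_q[z],\ \deg y(z)\le k-1\right\}.
$$

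Finally I would apply $ev_D$. For such a $y$ one has $v_{P_i}(y(z)/s(z))=v_{P_i}(y(z))-v_{P_i}(s(z))\ge 0$ since $s(\alpha_i)\ne 0$, so $y(z)/s(z)\in\mathcal O_{P_i}$ with residue $\frac{y(\alpha_i)}{s(\alpha_i)}=v_iy(\alpha_i)$ at $P_i$. Therefore
$$
C_{\mathscr L}(D,G)=ev_D(\mathscr L(G))=\big\{(v_1y(\alpha_1),\ldots,v_ny(\alpha_n)) : y(z)\in\Bbb F_q[z],\ \deg y(z)\le k-1\big\},
$$
which is precisely $GRS_k(\boldsymbol \alpha, \boldsymbol v)$ by (\ref{GRS eq}). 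As a sanity check one may note $\deg G=k-1$ (principal divisors have degree $0$), and since $1\le k<n$ this lies in $[0,n-2]$, so Proposition \ref{prop para} gives $\dim C_{\mathscr L}(D,G)=k=\dim GRS_k(\boldsymbol \alpha, \boldsymbol v)$.

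I expect the only delicate point to be the middle step: correctly converting the divisor inequality $(y)\ge -(k-1)P_\infty$ into the statement ``$y$ is a polynomial of degree at most $k-1$'' using the explicit form of principal divisors of rational functions, and keeping track of $supp(G)\cap supp(D)=\emptyset$ so that evaluation at the $P_i$ makes sense. The rest is routine verification.
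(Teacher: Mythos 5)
Your proposal is correct and follows the same overall architecture as the paper's proof: construct $s(z)$ by Lagrange interpolation, identify $\mathscr L((k-1)P_\infty+(s(z)))$ with $\bigl\{y(z)/s(z) : \deg y(z)\le k-1\bigr\}$, and evaluate at the places $P_i$ to recover the generator matrix (\ref{GRS gm}). The one substantive difference is how that identification is justified: the paper computes $\ell(G)=k$ from the Riemann--Roch Theorem (Theorem \ref{RR thm}, using $\deg G=k-1>-2$) and then exhibits the $k$ linearly independent elements $1/s(z),\dots,z^{k-1}/s(z)$, whereas you determine the whole space directly by rewriting $(x)\ge -G$ as $(xs(z))\ge -(k-1)P_\infty$ and invoking the explicit structure of principal divisors of $\Bbb F_q(z)$. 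Your route is slightly more elementary (Riemann--Roch appears only as a sanity check) and has the small virtue of explicitly verifying $supp(G)\cap supp(D)=\emptyset$, which the paper leaves implicit; the paper's route is shorter once Riemann--Roch is taken for granted. Both are complete and correct.
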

\begin{proof}
Note that the $k$-dimensional GRS code $GRS_k(\boldsymbol \alpha, \boldsymbol v)$ has a generator matrix defined by (\ref{GRS gm}). According to Lagrange interpolation, there exists a unique polynomial $s(z) \in \Bbb F_q[z]$ with $0\leq \deg s(z) <n$ such that
$$
s(\alpha_i)=\frac 1{v_i} \in \Bbb F_q^*, i=1,2,\ldots,n.
$$
Then it is easy to verify that $\{\frac 1{s(z)}, \frac z{s(z)},\ldots, \frac{z^{k-1}}{s(z)}\}$ is a basis of $\mathscr L((k-1)P_{\infty}+(s(z)))$.

In fact, by Theorem \ref{RR thm} and $\deg((k-1)P_{\infty}+(s(z)))=k-1 > -2$, $\dim(\mathscr L((k-1)P_{\infty}+(s(z))))=\deg((k-1)P_{\infty}+(s(z)))+1=k$.
On the other hand, there are $k$ $\Bbb F_q$-linearly independent elements: $\frac 1{s(z)}, \frac z{s(z)},\ldots, \frac{z^{k-1}}{s(z)}$, in $\mathscr L((k-1)P_{\infty}+(s(z)))$. Then $\{\frac 1{s(z)}, \frac z{s(z)},\ldots, \frac{z^{k-1}}{s(z)}\}$ is a basis of $\mathscr L((k-1)P_{\infty}+(s(z)))$.

  Hence
$C_{\mathscr L}(D,(k-1)P_{\infty}+(s(z)))$ has the  generator matrix as (\ref{GRS gm}),  i.e.,
$$
GRS_k(\boldsymbol \alpha, \boldsymbol v)=C_{\mathscr L}(D,(k-1)P_{\infty}+(s(z))).
$$
\end{proof}
\begin{rem}
Recall the $k$-dimensional GRS code via Goppa code $GRS_k(\boldsymbol \alpha, \boldsymbol v)$ in \cite{GY}, where $\boldsymbol v=(v_1,\ldots,v_n)$ satisfies $v_i=g(\alpha_i)^{-1}$ for a polynomial $g(z)\in\Bbb F_q[z]$ of degree $k$, $i=1,\ldots,n$.
What is different is that the polynomial $g(z)\in \Bbb F_q[z]$ in \cite{GY} is of degree $k$, while the polynomial $s(z)\in \Bbb F_q[z]$ in Proposition \ref{GRS AG} is of degree less than $n$, which is not necessarily $k$.
\end{rem}
\begin{prop}\label{GRS dual}
Let $GRS_k(\boldsymbol \alpha, \boldsymbol v)$ be the GRS code in Proposition \ref{GRS AG}. Then
$$
GRS_k(\boldsymbol \alpha, \boldsymbol v)^\perp=C_{\mathscr L}(D,(k-1)P_{\infty}+(s(z)))^\perp=C_{\mathscr L}(D,(n-k-1)P_{\infty}+(t(z)))
$$
if and only if
\begin{equation} \label{im eq}
s(z)t(z)=u(z)h(z)+ h'(z),
\end{equation}
where $u(z)\in \Bbb F_q[z]$ is of  degree at most $n-2$ and  $t(z)\in \Bbb F_q[z]$ is a polynomial with $0\leq\deg t(z)< n$ and $t(\alpha_i)=\frac { h'(\alpha_i)}{s(\alpha_i)}$, $i=1,2,\ldots,n$.
\end{prop}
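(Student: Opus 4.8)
The plan is to read off $GRS_k(\boldsymbol\alpha,\boldsymbol v)^\perp$ directly from the rational algebraic geometry representation of Proposition~\ref{GRS AG} and then identify it with a code of the asserted form. Writing $G=(k-1)P_\infty+(s(z))$, so that $GRS_k(\boldsymbol\alpha,\boldsymbol v)=C_{\mathscr L}(D,G)$, I would invoke Proposition~\ref{AG dual} together with Lemma~\ref{wd}: for the Weil differential $\eta$ with $(\eta)=(h'(z))-(h(z))-2P_\infty$ one has $GRS_k(\boldsymbol\alpha,\boldsymbol v)^\perp=C_{\mathscr L}(D,D-G+(\eta))$, and since $D=(h(z))+nP_\infty$ the term $(h(z))$ cancels, leaving
$$
D-G+(\eta)=(n-k-1)P_\infty+(h'(z))-(s(z))=(n-k-1)P_\infty+\Bigl(\tfrac{h'(z)}{s(z)}\Bigr).
$$
Hence $GRS_k(\boldsymbol\alpha,\boldsymbol v)^\perp=C_{\mathscr L}\bigl(D,(n-k-1)P_\infty+(h'(z)/s(z))\bigr)$; this divisor bookkeeping, which uses that $\gcd(n,q)=1$ gives $\deg h'(z)=n-1$ and that $h'(\alpha_i)=\prod_{j\neq i}(\alpha_i-\alpha_j)\neq0$, $s(\alpha_i)\neq0$ (so $h'(z)/s(z)$ is regular and nonzero at each $P_i$ and the support hypothesis for $C_{\mathscr L}$ holds), is essentially the only place the function-field machinery is used.

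The core of the proof is then to compare $C_{\mathscr L}(D,(n-k-1)P_\infty+(h'(z)/s(z)))$ with $C_{\mathscr L}(D,(n-k-1)P_\infty+(t(z)))$. Using property~(3) of Riemann--Roch spaces, multiplication by $s(z)/h'(z)$ carries the standard basis $\{1,z,\dots,z^{n-k-1}\}$ of $\mathscr L((n-k-1)P_\infty)$ to a basis $\{z^{j}s(z)/h'(z):0\le j\le n-k-1\}$ of $\mathscr L((n-k-1)P_\infty+(h'(z)/s(z)))$, and likewise multiplication by $1/t(z)$ gives the basis $\{z^{j}/t(z):0\le j\le n-k-1\}$ of $\mathscr L((n-k-1)P_\infty+(t(z)))$. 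Evaluating the $j$-th of these at $P_i$ produces $\alpha_i^{j}s(\alpha_i)/h'(\alpha_i)$ in the first case and $\alpha_i^{j}/t(\alpha_i)$ in the second, so the two generator matrices of the shape~(\ref{AGgm}) are identical --- and therefore the two codes are equal --- exactly when $t(\alpha_i)=h'(\alpha_i)/s(\alpha_i)$ for all $i$.

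It then remains to check that this interpolation condition coincides with~(\ref{im eq}). If $s(z)t(z)=u(z)h(z)+h'(z)$ with $\deg t(z)<n$ and $\deg u(z)\le n-2$, evaluating at $z=\alpha_i$ (where $h(\alpha_i)=0$) gives $s(\alpha_i)t(\alpha_i)=h'(\alpha_i)$, hence $t(\alpha_i)=h'(\alpha_i)/s(\alpha_i)$ (which is moreover nonzero, as required for a GRS multiplier). Conversely, if $\deg t(z)<n$ and $t(\alpha_i)=h'(\alpha_i)/s(\alpha_i)$ for all $i$, then $s(z)t(z)-h'(z)$ vanishes at the $n$ distinct roots of $h(z)$, so $h(z)\mid s(z)t(z)-h'(z)$; writing $s(z)t(z)-h'(z)=u(z)h(z)$ and using $\deg(s(z)t(z))\le 2n-2$ together with $\deg h'(z)=n-1$ forces $\deg u(z)\le n-2$, i.e.~(\ref{im eq}). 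Assembling these facts with the two previous paragraphs yields both implications of the proposition; for the implication starting from the code equality one reads off $t(\alpha_i)=h'(\alpha_i)/s(\alpha_i)$ by comparing the GRS multiplier vectors of $GRS_k(\boldsymbol\alpha,\boldsymbol v)^\perp=GRS_{n-k}(\boldsymbol\alpha,(s(\alpha_i)/h'(\alpha_i))_i)$ and $C_{\mathscr L}(D,(n-k-1)P_\infty+(t(z)))=GRS_{n-k}(\boldsymbol\alpha,(1/t(\alpha_i))_i)$.

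The point that I expect to need the most care is the code identification in the second paragraph: the divisors $(n-k-1)P_\infty+(h'(z)/s(z))$ and $(n-k-1)P_\infty+(t(z))$ are genuinely different (they are equal only in the degenerate situation $h'(z)=s(z)t(z)$, that is, $u\equiv0$), so no divisor identity can be used; what makes the two codes agree is that their Riemann--Roch spaces, though distinct, yield \emph{the same} generator matrix once one uses the natural bases. Verifying that the displayed sets are indeed bases --- this is where Theorem~\ref{RR thm} (genus $g=0$, $\deg\ge-2$) and property~(3) enter --- and checking at the start that $h'(z)/s(z)$ has no zero or pole among the $P_i$, which is exactly what $\gcd(n,q)=1$ and $s(\alpha_i)\in\Bbb F_q^*$ provide, are the remaining details to attend to.
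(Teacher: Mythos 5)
Your proposal is correct and follows essentially the same route as the paper: compute $D-G+(\eta)=(n-k-1)P_\infty+(h'(z))-(s(z))$ via Proposition \ref{AG dual} and Lemma \ref{wd}, read off the generator matrix with entries $\alpha_i^{j}s(\alpha_i)/h'(\alpha_i)$ from the basis $\{z^{j}s(z)/h'(z)\}$, define $t(z)$ by Lagrange interpolation, and translate $t(\alpha_i)=h'(\alpha_i)/s(\alpha_i)$ into the divisibility statement $h(z)\mid s(z)t(z)-h'(z)$, i.e.\ (\ref{im eq}) with $\deg u(z)\le n-2$. Your treatment is if anything slightly more explicit than the paper's about the converse direction and the degree bound on $u(z)$, but the underlying argument is the same.
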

\begin{proof}
By Proposition \ref{AG dual},  Lemma \ref{wd},  and the principal divisor $(h(z))=D-nP_{\infty}$,
$$
\begin{aligned}
&GRS_k(\boldsymbol \alpha, \boldsymbol v)^\perp=C_{\mathscr L}(D,(k-1)P_{\infty}+(s(z)))^\perp\\
&=C_{\mathscr L}(D,D-((k-1)P_{\infty}+(s(z)))+(h'(z))-(h(z))-2P_{\infty})\\
&=C_{\mathscr L}(D,-(s(z))+(h'(z))+(n-k-1)P_{\infty}).
\end{aligned}
$$
By the proof of  Proposition \ref{GRS AG}, it is easy to check that  $\{\frac {s(z)}{h'(z)},
\frac {s(z)}{h'(z)}z,\ldots, \frac {s(z)}{h'(z)}z^{n-k-1}\}$ is a basis of $\mathscr L(-(s(z))+(h'(z))+(n-k-1)P_{\infty})$, then $GRS_k(\boldsymbol \alpha, \boldsymbol v)^\perp$ has a generator matrix:
$$
H=\left(\begin{array}{ccc}
\frac{s(\alpha_1)}{h'(\alpha_1)} &  \cdots & \frac{s(\alpha_n)}{h'(\alpha_n)} \\
\frac{s(\alpha_1)}{h'(\alpha_1)}\alpha_1 &  \cdots & \frac{s(\alpha_n)}{h'(\alpha_n)}\alpha_n \\
\vdots &  & \vdots \\
\frac{s(\alpha_1)}{h'(\alpha_1)}\alpha_1^{n-k-1} &  \cdots & \frac{s(\alpha_n)}{h'(\alpha_n)}\alpha_n^{n-k-1}
\end{array}\right).
$$

According to Lagrange interpolation again, there exists a polynomial $t(z) \in \Bbb F_q[z]$ with $0\leq \deg t(z) <n$ such that
$$
t(\alpha_i)=\frac {h'(\alpha_i)}{s(\alpha_i)}, s(\alpha_i)t(\alpha_i)= h'(\alpha_i), i=1,2,\ldots,n.
$$

Hence  $h(z)\mid(s(z)t(z)- h'(z))$, i.e.,
$$
s(z)t(z)=u(z)h(z)+h'(z),
$$
where  $u(z)\in \Bbb F_q[z]$ is of  degree at most $n-2$ (by convention, we set $\deg 0=-\infty$).
Moreover,
$$
GRS_k(\boldsymbol \alpha, \boldsymbol v)^\perp=C_{\mathscr L}(D,(n-k-1)P_{\infty}+(t(z))),
$$
 which completes the proof.
\end{proof}

\begin{rem}
If $s(z)\in \Bbb F_q[z]$ is a polynomial of degree $k$ such that $s(\alpha_i)=\frac 1{v_i}\in \Bbb F_q^*$, $i=1,2,\ldots,n$, then $(k-1)P_{\infty}+(s(z))=(s(z))_0-P_{\infty}$ and $-(s(z))+(h'(z))+(n-k-1)P_{\infty}=(h'(z))+(n-1)P_{\infty}-(s(z))_0$.
From \cite [Proposition 2.3.11]{S}, we know that $C_{\mathscr L}(D,-(s(z))+(h'(z))+(n-k-1)P_{\infty})$ is just the Goppa code with the Goppa polynomial $s(z)$, while
$C_{\mathscr L}(D,(k-1)P_{\infty}+(s(z)))$ is the dual code of Goppa code, i.e., the GRS code $GRS_k(\boldsymbol \alpha, \boldsymbol v)$ via Goppa code in \cite{GY}.
\end{rem}

In the following, we always assume that $GRS_k(\boldsymbol \alpha, \boldsymbol v)=C_{\mathscr L}(D,(k-1)P_{\infty}+(s(z)))$ and $GRS_k(\boldsymbol \alpha, \boldsymbol v)^\perp=C_{\mathscr L}(D,(n-k-1)P_{\infty}+(t(z)))$ be the two GRS codes in Propositions \ref{GRS AG} and  \ref{GRS dual}, respectively, where   $$s(z)t(z)=u(z)h(z)+h'(z)$$ and $\deg u(z)\le n-2$.
Define
\begin{equation}\label{v}
\varepsilon=\left\{\begin{array}{ll}
-1, &\mbox{if }u(z)=0,\\ \deg u(z), &\mbox{if }u(z)\ne0.\end{array}\right.
\end{equation}
Then $\deg s(z)+\deg t(z)=n+\varepsilon$.

According to Lemmas \ref{chen} and \ref{dual}, we can transform the problem of studying the hull of a GRS code into the problem of studying the sum of a GRS code and its dual code.

\begin{thm} \label{main th} If
$k+\varepsilon -\deg d(z)-1 <\deg s(z)<k+\frac {\varepsilon}2$, then
\begin{equation} \label{th2 eq2}
Hull(GRS_k(\boldsymbol \alpha, \boldsymbol v))=C_{\mathscr L}(D, (d(z))+(\deg d(z)+\deg s(z)-k-\varepsilon-1)P_{\infty})
\end{equation} is a GRS code.

If $k+\frac {\varepsilon}2\leq\deg s(z)<k+\deg d(z)+1$, then
\begin{equation} \label{th2 eq1}
Hull(GRS_k(\boldsymbol \alpha, \boldsymbol v))=C_{\mathscr L}(D,(d(z))+(k+\deg d(z)-\deg s(z)-1)P_{\infty})
\end{equation} is a GRS code.

Moreover, if $k+\varepsilon -\deg d(z)-1<\deg s(z)<k+\deg d(z)+1$, then the hull of $GRS_k(\boldsymbol \alpha, \boldsymbol v)$ is also a GRS code.
\end{thm}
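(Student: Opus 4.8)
The plan is to pass from the hull to a divisor computation. By Lemma~\ref{dual} we have
$Hull(GRS_k(\boldsymbol\alpha,\boldsymbol v))=(GRS_k(\boldsymbol\alpha,\boldsymbol v)+GRS_k(\boldsymbol\alpha,\boldsymbol v)^\perp)^\perp$, so, writing $G_1:=(k-1)P_\infty+(s(z))$ and $G_2:=(n-k-1)P_\infty+(t(z))$ from Propositions~\ref{GRS AG} and~\ref{GRS dual}, it suffices to compute $C_{\mathscr{L}}(D,G_1\cup G_2)^\perp$. Put $d(z)=\gcd(s(z),t(z))$ and $M(z)=\mathrm{lcm}(s(z),t(z))$ (monic). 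Since $s(z),t(z),h'(z)$ have no root among the $\alpha_i$, their zeros avoid $supp(D)$, and the finite part of $G_1\cup G_2$ is the zero divisor $(M(z))_0$, with $\deg M(z)=\deg s(z)+\deg t(z)-\deg d(z)=n+\varepsilon-\deg d(z)$, while
$v_{P_\infty}(G_1\cup G_2)=\max\{\,k-1-\deg s(z),\ \deg s(z)-k-1-\varepsilon\,\}$; the first term dominates exactly when $\deg s(z)<k+\frac{\varepsilon}{2}$ and the second when $\deg s(z)\ge k+\frac{\varepsilon}{2}$, which is the dichotomy in the statement.

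First I would verify the hypothesis $\deg(G_1\cap G_2)>-2$ needed to apply Lemma~\ref{chen}. Since $\deg G_1+\deg G_2=n-2$, this is equivalent to $\deg(G_1\cup G_2)<n$, and computing $\deg(G_1\cup G_2)=\deg M(z)+v_{P_\infty}(G_1\cup G_2)$ shows that in the first regime the inequality $\deg(G_1\cup G_2)<n$ is precisely $\deg s(z)>k+\varepsilon-\deg d(z)-1$, and in the second regime precisely $\deg s(z)<k+\deg d(z)+1$. Hence under either hypothesis Lemma~\ref{chen} gives $GRS_k(\boldsymbol\alpha,\boldsymbol v)+GRS_k(\boldsymbol\alpha,\boldsymbol v)^\perp=C_{\mathscr{L}}(D,G_1\cup G_2)$; dualizing via Proposition~\ref{AG dual} and Lemma~\ref{wd}, and using $D=(h(z))+nP_\infty$ and $(h'(z))=(h'(z))_0-(n-1)P_\infty$, the divisor $D-(G_1\cup G_2)+(h'(z))-(h(z))-2P_\infty$ collapses to
$$
A:=(h'(z))_0-(M(z))_0+cP_\infty,\qquad c=\left\{\begin{array}{ll}\deg s(z)-k,&\mbox{if }\deg s(z)<k+\frac{\varepsilon}{2},\\ k+\varepsilon-\deg s(z),&\mbox{if }\deg s(z)\ge k+\frac{\varepsilon}{2},\end{array}\right.
$$
so that $Hull(GRS_k(\boldsymbol\alpha,\boldsymbol v))=C_{\mathscr{L}}(D,A)$.

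The crux is to identify $C_{\mathscr{L}}(D,A)$ with $C_{\mathscr{L}}(D,A_0)$, where $A_0$ is the divisor appearing in \eqref{th2 eq2} in the first regime and in \eqref{th2 eq1} in the second. Using $(M(z))_0+(d(z))_0=(s(z)t(z))_0$ and $\deg(s(z)t(z))=n+\varepsilon$, one finds in both regimes that
$$
A-A_0=(h'(z))_0-(s(z)t(z))_0+(\varepsilon+1)P_\infty=\left(\frac{h'(z)}{s(z)t(z)}\right),
$$
a principal divisor, whence $\mathscr{L}(A)=f^{-1}\mathscr{L}(A_0)$ with $f=h'(z)/(s(z)t(z))$. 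Now $f$ has neither a zero nor a pole at any $P_i$, and by the interpolation relation $s(\alpha_i)t(\alpha_i)=h'(\alpha_i)$ one gets $f(P_i)=1$ for every $i$; therefore $ev_D(f^{-1}y)=ev_D(y)$ for all $y\in\mathscr{L}(A_0)$, so $C_{\mathscr{L}}(D,A)=C_{\mathscr{L}}(D,A_0)$, giving \eqref{th2 eq2} and \eqref{th2 eq1}. As $C_{\mathscr{L}}(D,A_0)$ is a rational AG code of length $n<q$, it is a GRS code by \cite[Proposition 2.3.3(a)]{S}. Finally the ``moreover'' part is immediate: if $k+\varepsilon-\deg d(z)-1<\deg s(z)<k+\deg d(z)+1$ then either $\deg s(z)<k+\frac{\varepsilon}{2}$ (and $\deg s(z)>k+\varepsilon-\deg d(z)-1$), so the first case applies, or $\deg s(z)\ge k+\frac{\varepsilon}{2}$ (and $\deg s(z)<k+\deg d(z)+1$), so the second case applies.

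The step I expect to be the main obstacle is the last one. The naive divisor bookkeeping only exhibits the hull as $C_{\mathscr{L}}(D,A)$, whose finite support lies at the zeros of $M(z)$ rather than at the zeros of $d(z)$; one must recognize that $A$ and the target divisor $A_0$ differ by the principal divisor of the specific function $h'(z)/(s(z)t(z))$, which is constant (indeed identically $1$) on $supp(D)$ precisely because of the defining relation $s(\alpha_i)t(\alpha_i)=h'(\alpha_i)$ — without this one obtains only monomial equivalence of the two codes, not the asserted equality. Matching the $P_\infty$-coefficients in the two regimes, and pinning the Lemma~\ref{chen} hypothesis to the two stated inequalities, is the remaining case-by-case verification.
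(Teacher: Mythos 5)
Your proof is correct and follows essentially the same route as the paper: reduce via Lemma~\ref{dual} and Lemma~\ref{chen} to computing $C_{\mathscr L}(D,G_1\cup G_2)^\perp$, check $\deg(G_1\cap G_2)>-2$ in the two regimes, and then dualize. The only cosmetic difference is that the paper finishes by citing Proposition~\ref{GRS dual} with the pair $(l(z),d(z))$ in place of $(s(z),t(z))$, whereas you unpack that step explicitly as a shift by the principal divisor of $h'(z)/(s(z)t(z))$, which is $1$ at every $P_i$; these are the same computation.
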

\begin{proof}
Denote divisors
\begin{equation}\label{case 2 G_1}
G_1=(k-1)P_{\infty}+(s(z))=(s(z))_0+(k-1-\deg s(z))P_{\infty}
\end{equation}
and
$$
G_2=(n-k-1)P_{\infty}+(t(z))=(t(z))_0+(n-k-1-\deg t(z))P_{\infty}.
$$
In fact,  $\deg s(z)+\deg t(z)=n+\varepsilon$ and
\begin{equation}\label{case 2 G_2}
G_2=(t(z))_0+(\deg s(z)-\varepsilon-k-1)P_{\infty}.
\end{equation}

In the following, we consider the condition $\deg(G_1\cap G_2)>-2$ in Lemma \ref{chen} in  two cases.

{\bf Case 1: $\deg s(z)<k+ \frac {\varepsilon}2$}. It is clear that
$$
\begin{aligned}
G_1 \cap G_2&=(d(z))_0+\min\{k-1-\deg s(z),\deg s(z)-\varepsilon-k-1\}P_{\infty}.
\end{aligned}
$$
Suppose that $k-1-\deg s(z)\le\deg s(z)-\varepsilon-k-1$. Then $\deg s(z)\ge k+\frac {\varepsilon}2$, which is contradictory. Hence $G_1\cap G_2=(d(z))_0+(\deg s(z)-\varepsilon-k-1)P_{\infty}$
and $\deg(G_1 \cap G_2)=\deg d(z)+\deg s(z)-\varepsilon-k-1$. Then $\deg(G_1\cap G_2)>-2$ if and only if $k+\varepsilon -\deg d(z)-1 < \deg s(z)<k+\frac {\varepsilon}2$.

If  $k+\varepsilon -\deg d(z)-1 <\deg s(z)<k+\frac {\varepsilon}2$, then by $\deg l(z)+\deg d(z)=\deg s(z)+\deg t(z)=n+\varepsilon$,

$$
\begin{aligned}
G_1 \cup G_2&=(l(z))_0+\max\{k-1-\deg s(z),\deg s(z)-\varepsilon-k-1\}P_{\infty}\\
&=(l(z))_0+(k-1-\deg s(z))P_{\infty}\\
&=(l(z))+(n+\varepsilon-\deg d(z)+k-1-\deg s(z))P_{\infty}.
\end{aligned}
$$
Hence by Lemmas \ref{chen} and \ref{dual},
$$
\begin{aligned}
Hull(GRS_k(\boldsymbol \alpha, \boldsymbol v))&=GRS_k(\boldsymbol \alpha, \boldsymbol v)\cap GRS_k(\boldsymbol \alpha, \boldsymbol v)^\perp=(GRS_k(\boldsymbol \alpha, \boldsymbol v)+GRS_k(\boldsymbol \alpha, \boldsymbol v)^\perp)^\perp \\
&=(C_{\mathscr L}(D,G_1)+C_{\mathscr L}(D,G_2))^\perp = C_{\mathscr L}(D,G_1\cup G_2)^\perp\\
&=C_{\mathscr L}(D,(l(z))+(n+\varepsilon+k-\deg d(z)-\deg s(z)-1)P_{\infty})^\perp.
\end{aligned}
$$
By $l(z)d(z)=s(z)t(z)=u(z)h(z)+h'(z)$ and Proposition \ref{GRS dual},
$$
Hull(GRS_k(\boldsymbol \alpha, \boldsymbol v))=C_{\mathscr L}(D,(d(z))+(\deg d(z)+\deg s(z)-k-\varepsilon-1)P_{\infty})
$$
is a GRS code.

{\bf Case 2: $\deg s(z)\ge k+\frac {\varepsilon}2$}. It is clear that
$$
\begin{aligned}
\deg(G_1 \cap G_2)&=\deg((d(z))_0+(k-1-\deg s(z))P_{\infty}) \\
&=\deg d(z)+k-1-\deg s(z).
\end{aligned}
$$
Then $\deg(G_1\cap G_2)>-2$ if and only if $k+\frac {\varepsilon}2\leq \deg s(z)<k+\deg d(z)+1$.

If $k+\frac {\varepsilon}2\leq\deg s(z)<k+\deg d(z)+1$, similarly,
$$
G_1 \cup G_2=(l(z))+(n+\deg s(z)-\deg d(z)-k-1)P_{\infty}
$$
and
$$
Hull(GRS_k(\boldsymbol \alpha, \boldsymbol v))=C_{\mathscr L}(D,(d(z))+(k+\deg d(z)-\deg s(z)-1)P_{\infty})
$$
is a GRS code.

Moreover, if $k+\varepsilon -\deg d(z)-1<\deg s(z)<k+\deg d(z)+1$, then the hull of $GRS_k(\boldsymbol \alpha, \boldsymbol v)$ is also a GRS code.
\end{proof}
\begin{rem}In Theorem \ref{main th}, we need the condition: $2\deg d(z)+2> \varepsilon$.
\end{rem}

In the following, we always assume the following notations.
\begin{center}
\begin{tabular}{ll}
$s(z)=s_0+s_1z+\cdots+s_{\mu} z^{\mu}\in \Bbb F_q[z]$ with $s_{\mu}\ne 0$. \\
$t(z)=t_0+t_1z+\cdots+t_{\nu}z^{\nu}\in \Bbb F_q[z]$ with $t_{\nu}\ne 0$. \\
$d(z)=\gcd(s(z),t(z))\in \Bbb F_q[z]$ is the greatest common divisor of $s(z)$ and $t(z)$.\\
$l(z)=\mbox{lcm}(s(z),t(z))\in \Bbb F_q[z]$ is the least common multiply of $s(z)$ and $t(z)$.\\
$\deg s(z)=\mu$ and $\deg t(z)=\nu$, where $\mu+\nu=n+\varepsilon$ and $\varepsilon$ is defined by (\ref{v}).\\
$\deg d(z)=\delta$.\\
 \end{tabular}
\end{center}

  If the conditions in Theorem \ref{main th} are not satisfied, then  we shall give a method of linear algebra to find the bases of hulls of GRS codes and exactly compute the dimensions of the hulls.

   If the conditions in Theorem \ref{main th} are not satisfied, then one of the following three statements holds:

 (1) $2\deg d(z)+2>\varepsilon$ and $\deg s(z)\leq k+\varepsilon-\deg d(z)-1$.

 (2) $2\deg d(z)+2>\varepsilon$ and $\deg s(z)\geq k+\deg d(z)+1.$

 (3) $2\deg d(z)+2\le \varepsilon$.

Now we shall consider the above three cases.
Recall that $GRS_k(\boldsymbol \alpha, \boldsymbol v)$ and $GRS_k(\boldsymbol \alpha, \boldsymbol v)^\perp$ have generator matrices:
$$
\begin{pmatrix}
\frac 1{s(\alpha_1)} & \frac 1{s(\alpha_2)} & \cdots & \frac 1{s(\alpha_n)} \\
\frac {\alpha_1}{s(\alpha_1)} & \frac{\alpha_2}{s(\alpha_2)} & \cdots & \frac{\alpha_n}{s(\alpha_n)} \\
\vdots & \vdots & & \vdots \\
\frac{\alpha_1^{k-1}}{s(\alpha_1)} & \frac{\alpha_2^{k-1}}{s(\alpha_2)} & \cdots & \frac{\alpha_n^{k-1}}{s(\alpha_n)}
\end{pmatrix}_{k\times n}
~~\text{and}~~
\begin{pmatrix}
\frac 1{t(\alpha_1)} & \frac 1{t(\alpha_2)} & \cdots & \frac 1{t(\alpha_n)} \\
\frac {\alpha_1}{t(\alpha_1)} & \frac{\alpha_2}{t(\alpha_2)} & \cdots & \frac{\alpha_n}{t(\alpha_n)} \\
\vdots & \vdots & & \vdots \\
\frac{\alpha_1^{n-k-1}}{t(\alpha_1)} & \frac{\alpha_2^{n-k-1}}{t(\alpha_2)} & \cdots & \frac{\alpha_n^{n-k-1}}{t(\alpha_n)}
\end{pmatrix}_{(n-k)\times n},
$$
respectively, where
$$s(z)=s_0+s_1z+\cdots+s_{\mu} z^{\mu}, t(z)=t_0+t_1z+\cdots+t_{\nu}z^{\nu}\in \Bbb F_q[z],
$$
and $s(z)t(z)=u(z)h(z)+h'(z)$, $ s_{\mu} t_{\nu}\neq 0$,  $\deg s(z)+\deg t(z)=n+\varepsilon$.

Now we find two polynomials $$f(z)=f_0+f_1z+\cdots+f_{k-1}z^{k-1}, g(z)=g_0+g_1z+\cdots+g_{n-k-1}z^{n-k-1}\in \Bbb F_q[z]$$ such that
$$
\begin{pmatrix}
\frac{f(\alpha_1)}{s(\alpha_1)},\frac{f(\alpha_2)}{s(\alpha_2)},\ldots,
\frac{f(\alpha_n)}{s(\alpha_n)}
\end{pmatrix}
=\begin{pmatrix}
\frac{g(\alpha_1)}{t(\alpha_1)},\frac{g(\alpha_2)}{t(\alpha_2)},\ldots,
\frac{g(\alpha_n)}{t(\alpha_n)}
\end{pmatrix}
\in Hull(GRS_k(\boldsymbol \alpha, \boldsymbol v)),
$$
which is equivalent to
$$
\frac{f(\alpha_i)}{s(\alpha_i)}=\frac{g(\alpha_i)}{t(\alpha_i)},
f(\alpha_i)t(\alpha_i)=g(\alpha_i)s(\alpha_i), i=1,\ldots,n,
$$
which is equivalent to
\begin{equation*}
f(z)t(z)-g(z)s(z)\equiv 0 \pmod {h(z)},
\end{equation*}
 which  is equivalent to
\begin{equation} \label{mod h}
f(z)t(z)-g(z)s(z)=r(z) h(z),
\end{equation}
where $r(z)\in \Bbb F_q[z]$ and $\deg r(z)$ depends on the higher degree of either  $f(z)t(z)$ or $g(z)s(z)$.

In fact, $\deg(f(z)t(z))\le k-1+\deg t(z)=k-1+n+\varepsilon-\deg s(z)$, $\deg(g(z)s(z))\le n-k-1+\deg s(z)$. It is clear that $n-k-1+\deg s(z) <k-1+n+\varepsilon-\deg s(z)$ if and only if $\deg s(z)<k+\frac{\varepsilon}{2}$. Hence
\begin{itemize}
\item If $\deg s(z)<k+\frac{\varepsilon}{2}$, then $\deg r(z)\le k+\varepsilon -\deg s(z)-1$ in (\ref{mod h}).
\item If $\deg s(z)\ge k+\frac{\varepsilon}{2}$, then $\deg r(z)\le\deg s(z)-k-1$ in (\ref{mod h}).
\end{itemize}

Now we consider the first case
$$2\deg d(z)+2> \varepsilon  \quad\mbox{and}\quad \deg s(z)\leq k+\varepsilon-\deg d(z)-1.$$
 Then the conditions in Theorem \ref{main th} are not satisfied,  $\deg s(z)<k+\frac{\varepsilon}{2}$, and $\deg r(z)\le k+\varepsilon-\deg s(z)-1$ in (\ref{mod h}). To  find  $f(z)$ and $g(z)$  in (\ref{mod h}), we consider  two cases. \\

\textbf{Case 1.} $f(z)t(z)=g(z)s(z)$.

If  $f(z)=\sum_{i=0}^{k-1}f_iz^i$ and $g(z)=\sum_{j=0}^{n-k-1}g_jz^j\in\Bbb F_q[z]$ satisfy $f(z)t(z)=g(z)s(z)$, then $\frac {f(z)}{s(z)}=\frac{g(z)}{t(z)}\in \mathscr L(G_1)\cap\mathscr L(G_2)$. By the proof of Theorem \ref{main th} and $\deg s(z)\leq k+\varepsilon-\deg d(z)-1$, $\deg(G_1\cap G_2)\le -2$ and $\mathscr L(G_1)\cap\mathscr L(G_2)=\{0\}$. Hence $f(z)=g(z)=0$.

On the other hand, the equality  $t(z)f(z)=s(z)g(z)$ induces that
 the  system of equations has only zero solution:$$A\left(\begin{array}{cccccccc}f_0&f_1&\cdots&f_{k-1}&-g_0&-g_1&\cdots&
 -g_{n-k-1}\end{array}\right)^{\mathrm T}=\boldsymbol 0,$$ where
\begin{equation}\label{A}
A=\begin{array}{ll}
&\begin{array}{c@{\hspace{-5pt}}l}
  \left(
 \begin{array}{cccc|cccc}
t_0 & 0 & \cdots &  0    & s_0 & 0 &  \cdots & 0 \\
t_1 & t_0 &  & \vdots & s_1 & s_0 &  & \vdots\\
\vdots &t_1 &  & 0 & \vdots & s_1 & & s_0\\
t_{\nu} & \vdots &  & t_0 & s_\mu &\vdots & & s_1\\
0 & t_{\nu} &   &  t_1    & 0 & s_\mu &  & \vdots \\
0 & 0 &   & \vdots & 0 &0 &  &s_\mu\\
\vdots & \vdots &  & \vdots &\vdots & \vdots& & \vdots\\
0 & 0 & \cdots & t_{\nu} &0  & 0 & \cdots & 0
 \end{array}
 \right)
 \end{array}
\\
 &\begin{array}{ccc}
\quad\underbrace{\rule{25mm}{0mm}}_{k~~\text{columns}} & &\underbrace{\rule{25mm}{0mm}}_{{n-k}~~\text{columns}}
 \end{array}
\end{array}=(A_1,\ldots, A_n)
\end{equation}
is an $(n+\varepsilon+k-\mu)\times n$ coefficient matrix, each $A_i$ is the $i$-th column vector of $A$, and $\mathrm T$ is a transpose transformation. Then    $\mbox{rank}(A)=n$ because the above  system has only zero solution.

\textbf{Case 2.} $f(z)t(z)-g(z)s(z)=r(z)h(z)$, where $0\ne r(z)\in \Bbb F_q[z]$ and  $\deg r(z)\leq k+\varepsilon-\deg s(z)-1$.

By $d(z)=\gcd(s(z),t(z))$ and $\gcd(h(z), s(z)t(z))=1$, $d(z)\mid r(z)$ and $r(z)=d(z)\overline r(z)$. Hence
\begin{equation}\label{g}
 f(z)t(z)-g(z)s(z)=\overline r(z)d(z)h(z),
\end{equation}
where  $0\le \deg \overline r(z)\leq  k+\varepsilon -\deg s(z)-\deg d(z)-1:=\gamma.$  Let
$d(z)h(z)=\sum_{i=0}^{n+\delta}b_iz^i$ and
\begin{equation}\label{b}
B_i=(\underbrace{0,\ldots, 0}_{i\ 0's}, b_0,b_1,\ldots, b_{n+\delta}, 0,\ldots,0)^{\mathrm T} \in \Bbb F_q^{n+\varepsilon+k-\mu},\end{equation}
 where $i=0, 1,\ldots, \gamma=k+\varepsilon -\deg s(z)-\deg d(z)-1$.
  Set
 \begin{equation}\label{B}
 B=(B_0,\ldots, B_{\gamma})
 \end{equation}
to be an $(n+\varepsilon+k-\mu)\times (\gamma+1)$ matrix,  where each $B_i$ is defined by (\ref{b}).

By {\bf Case 1}, if there is   a  polynomial $\overline r(z)$  satisfying (\ref{g}), then $(f(z), g(z))$ must be  unique.
If there is a polynomial $\overline r(z)=\sum_{i=0}^{\gamma}c_iz^i$ such that  a unique pair $(f(z)=\sum_{i=0}^{k-1}f_iz^i, g(z)=\sum_{i=0}^{n-k-1}g_iz^i)$ satisfies (\ref{g}), then $$\sum_{i=0}^{\gamma}c_iB_i=\sum_{i=0}^{k-1}f_iA_{i+1}-\sum_{i=0}^{n-k-1}g_iA_{i+k+1}\in L(A_1,\ldots, A_n),$$
where $L(A_1,\ldots, A_n)$ is a subspace generated by $A_1,\ldots, A_n$. Conversely, if $\sum_{i=0}^{\gamma}c_iB_i\in L(A_1,\ldots, A_n)$, then there is an  polynomial $\overline r(z)=\sum_{i=0}^{\gamma}c_iz^i$ such that  a unique pair $(f(z), g(z))$ satisfies (\ref{g}).
Let \begin{equation}\Omega=\{\overline r(z)=\sum_{i=0}^{\gamma}c_iz^i\in\Bbb F_q[z]: \mbox{$\overline r(z)$ satisfies (\ref{g})}\}\end{equation}
be an $\Bbb F_q$-linear subspace, then
 there is an  isomorphism between two $\Bbb F_q$-linear spaces:
\begin{eqnarray*}\Omega &\longrightarrow& L(A_1, \ldots, A_n)\cap L(B_0,\ldots, B_{\gamma})\\ \overline r(z)=\sum_{i=0}^{\gamma}c_iz^i &\longmapsto&\sum_{i=0}^{\gamma}c_iB_i,
\end{eqnarray*}
and there is an isomorphism between two $\Bbb F_q$-linear spaces:
\begin{eqnarray*}
L(A_1, \ldots, A_n)\cap L(B_0,\ldots, B_{\gamma})&\longrightarrow& Hull(GRS_k(\boldsymbol \alpha, \boldsymbol v))\\
\sum_{i=0}^{\gamma}c_iB_i &\longmapsto&(\frac{f(\alpha_1)}{s(\alpha_1)},\ldots, \frac{f(\alpha_n)}{s(\alpha_n)}),
\end{eqnarray*}
where$\sum_{i=0}^{\gamma}c_iB_i=\sum_{i=0}^{k-1}f_iA_{i+1}-\sum_{i=0}^{n-k-1}g_iA_{i+k+1}$ and $(\frac{f(\alpha_1)}{s(\alpha_1)},\ldots, \frac{f(\alpha_n)}{s(\alpha_n)})= (\frac{g(\alpha_1)}{t(\alpha_1)},\ldots,\frac{g(\alpha_n)}{t(\alpha_n)})$.

In the following, we choose all  non-zero  polynomials of distinct degrees in $\Omega$
\begin{equation}\label{r}
\overline r_1(z), \overline r_2(z),\ldots, \overline r_e(z),
\end{equation}
where $0\le \deg \overline r_1(z)<\deg \overline r_2(z)<\cdots<\deg \overline r_e(z)\le \gamma=k+\varepsilon-\deg s(z)-\deg d(z)-1$ and each $\overline r_i(z)$ is a polynomial of the $i$-th lowest degree in $\Omega$. We shall prove that (\ref{r}) is a basis of $\Omega$.

It is clear that the elements in (\ref{r}) are linearly independent over $\Bbb F_q$. Now we proceed by induction on the degree of every polynomial $\overline r(z)\in\Omega$ and prove that it can be linearly expressed by $\{\overline r_1(z),\ldots, \overline r_e(z)\}$.

If $\deg \overline r(z)=\deg \overline r_1(z)$, then there is $a\in \Bbb F_q^*$ such that $\overline r(z)-a\overline r_1(z)\in \Omega$ with $\deg (\overline r(z)-a\overline r_1(z))< \deg \overline r_1(z)$.
Since $\overline r_1(z)$ is a polynomial of the lowest degree in $\Omega$, $\overline r(z)=a\overline r_1(z)$.
We assume that the statement holds for every polynomial in $\Omega$ of degree less than $\deg \overline r_d(z)$, $d\le e$. Now let $\overline r(z)\in \Omega$ with $\deg \overline r(z)=\deg \overline r_d(z)$, then there is $a\in \Bbb F_q^*$ such that $\overline r(z)-a\overline r_d(z)\in \Omega$ with $\deg(\overline r(z)-a\overline r_d(z))<\deg \overline r_d(z)$. The induction hypothesis immediately yields the desired result. So (\ref{r}) is a basis of $\Omega$.

By the above two isomorphisms, we can obtain that $f_1(z), \ldots, f_e(z)$ are corresponding to $\overline r_1(z),\ldots, \overline r_e(z)$, respectively, so there is a basis of   $Hull(GRS_k(\boldsymbol \alpha, \boldsymbol v))$ as follows:
  $$(\frac{f_i(\alpha_1)}{s(\alpha_1)},\ldots, \frac{f_i(\alpha_n)}{s(\alpha_n)}), i=1,\ldots, e.$$

  Now  we have an algorithm to find a basis of $Hull(GRS_k(\boldsymbol \alpha, \boldsymbol v))$.
\begin{alg}\label{alg}
If $2\deg d(z)+2>\varepsilon$ and $\deg s(z)\leq k+\varepsilon-\deg d(z)-1$, then
we consider the matrix $$(A|B)=(A_1,\ldots, A_n| B_0,\ldots, B_{\gamma}),$$
where $A$ and $B$ are defined by (\ref{A}) and (\ref{B}), respectively.

(1) If there is the first smallest number $i_1$, $0\le i_1\le \gamma$,  such that  columns $A_1, \ldots, A_n$, $B_0, \ldots, B_{i_1}$ are linearly dependent over $\Bbb F_q$, then there is $(\overline r_1(z), f_1(z), g_1(z))$ satisfying (\ref{g}).

(2) If there is the second smallest number $i_2$, $i_1<i_2\le \gamma$, such that columns $A_1, \ldots, A_n$, $B_0,\ldots, B_{i_1-1}, B_{i_1+1},\ldots, B_{i_2}$ are linearly dependent over $\Bbb F_q$, then there is $(\overline r_2(z), f_2(z), g_2(z))$ satisfying (\ref{g}).

(3) And so on, there are polynomials  $f_1(z), f_2(z), \ldots, f_e(z)\in \Bbb F_q[z]$ such that
$$\left\{\left(\frac{f_i(\alpha_1)}{s(\alpha_1)},\ldots, \frac{f_i(\alpha_n)}{s(\alpha_n)}\right)~\Big|~ i=1,\ldots, e\right\}$$
is a basis of  $Hull(GRS_k(\boldsymbol \alpha, \boldsymbol v))$.
\end{alg}

\begin{thm} \label{main th2}
Let $GRS_k(\boldsymbol \alpha, \boldsymbol v)=C_{\mathscr L}(D,(k-1)P_{\infty}+(s(z)))$ and $GRS_k(\boldsymbol \alpha, \boldsymbol v)^\perp=C_{\mathscr L}(D,(n-k-1)P_{\infty}+(t(z)))$ be the two GRS codes in Propositions \ref{GRS AG} and \ref{GRS dual}, respectively.
If $2\deg d(z)+2>\varepsilon$ and $\deg s(z)\leq k+\varepsilon-\deg d(z)-1$, then
$$
\dim (Hull(GRS_k(\boldsymbol \alpha, \boldsymbol v)))=n+\gamma+1-\mbox{rank}~(A|B),$$
where $\gamma=k+\varepsilon-\deg s(z)-\deg d(z)-1$, $A$ and $B$ are defined by (\ref{A}) and (\ref{B}), respectively.
\end{thm}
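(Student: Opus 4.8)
The plan is to read off $\dim\big(Hull(GRS_k(\boldsymbol\alpha,\boldsymbol v))\big)$ from the two $\Bbb F_q$-linear isomorphisms already constructed above. Under the standing hypotheses $2\deg d(z)+2>\varepsilon$ and $\deg s(z)\le k+\varepsilon-\deg d(z)-1$ one has $\deg s(z)<k+\varepsilon/2$, so by the discussion preceding the theorem every vector of $Hull(GRS_k(\boldsymbol\alpha,\boldsymbol v))$ comes from a solution of (\ref{mod h}) with $r(z)=\overline r(z)d(z)$, where either $\overline r(z)=0$ (in which case $f(z)t(z)=g(z)s(z)$ and the vector is zero) or $0\le\deg\overline r(z)\le\gamma$. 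Composing the isomorphism $\Omega\to L(A_1,\dots,A_n)\cap L(B_0,\dots,B_\gamma)$, $\overline r(z)=\sum_i c_iz^i\mapsto\sum_i c_iB_i$, with the isomorphism $L(A_1,\dots,A_n)\cap L(B_0,\dots,B_\gamma)\to Hull(GRS_k(\boldsymbol\alpha,\boldsymbol v))$ described above, we obtain
$$
\dim\big(Hull(GRS_k(\boldsymbol\alpha,\boldsymbol v))\big)=\dim\big(L(A_1,\dots,A_n)\cap L(B_0,\dots,B_\gamma)\big),
$$
so it suffices to compute this intersection dimension inside $\Bbb F_q^{\,n+\varepsilon+k-\mu}$.

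First I would establish $\mathrm{rank}(A)=n$. By the analysis of the case $f(z)t(z)=g(z)s(z)$ above, together with the proof of Theorem~\ref{main th}, a vector in the kernel of $A$ corresponds to a pair $(f(z),g(z))$ with $f(z)/s(z)=g(z)/t(z)\in\mathscr L(G_1)\cap\mathscr L(G_2)=\mathscr L(G_1\cap G_2)$; since $\deg(G_1\cap G_2)=\deg d(z)+\deg s(z)-\varepsilon-k-1\le-2<0$ under the hypothesis $\deg s(z)\le k+\varepsilon-\deg d(z)-1$, one gets $\mathscr L(G_1\cap G_2)=\{0\}$ and hence $f(z)=g(z)=0$. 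Thus the columns $A_1,\dots,A_n$ are $\Bbb F_q$-linearly independent, so $\mathrm{rank}(A)=n$ and $\dim L(A_1,\dots,A_n)=n$. Next, writing $d(z)h(z)=\sum_{i=0}^{n+\delta}b_iz^i$ with $b_{n+\delta}\ne 0$, the vectors $B_0,\dots,B_\gamma$ of (\ref{b}) are staggered shifts of the nonzero vector $(b_0,\dots,b_{n+\delta})$, the entry $b_{n+\delta}$ of $B_i$ sitting in the pairwise distinct (in fact strictly increasing) coordinate $i+n+\delta+1$, the largest being the final coordinate $n+\varepsilon+k-\mu$; hence $B_0,\dots,B_\gamma$ are linearly independent and $\dim L(B_0,\dots,B_\gamma)=\gamma+1$.

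Finally, $L(A_1,\dots,A_n)+L(B_0,\dots,B_\gamma)$ is exactly the column space of the concatenated matrix $(A\mid B)$, whose dimension is $\mathrm{rank}(A\mid B)$, and the dimension formula for the sum and intersection of two subspaces gives
$$
\dim\big(L(A_1,\dots,A_n)\cap L(B_0,\dots,B_\gamma)\big)=n+(\gamma+1)-\mathrm{rank}(A\mid B).
$$
Combined with the first paragraph this yields $\dim\big(Hull(GRS_k(\boldsymbol\alpha,\boldsymbol v))\big)=n+\gamma+1-\mathrm{rank}(A\mid B)$, as asserted.

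The routine ingredients are the subspace dimension formula, the identification of $L(A_1,\dots,A_n)+L(B_0,\dots,B_\gamma)$ with the column space of $(A\mid B)$, and the staggered-shift argument for linear independence of the $B_i$. The step carrying the real weight is $\mathrm{rank}(A)=n$, that is, that $f(z)t(z)=g(z)s(z)$ admits no nonzero solution with $\deg f(z)\le k-1$ and $\deg g(z)\le n-k-1$ in this range; this is precisely where the hypothesis $\deg s(z)\le k+\varepsilon-\deg d(z)-1$ (through $\deg(G_1\cap G_2)\le-2$) is indispensable, and it is also what guarantees that each admissible $\overline r(z)$ determines the pair $(f(z),g(z))$ uniquely, so that the two maps invoked above are genuine isomorphisms and not merely surjections. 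A minor point to confirm is that $\gamma\ge 0$, so that the block $B$ is nonempty, which follows from the same hypothesis, and that the ambient dimension $n+\varepsilon+k-\mu$ is consistent with the sizes of $A$ and $B$ displayed in (\ref{A}) and (\ref{B}).
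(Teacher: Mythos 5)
Your proposal is correct and follows essentially the same route as the paper, which states Theorem~\ref{main th2} without a separate proof because the preceding discussion (the two isomorphisms $\Omega\cong L(A_1,\dots,A_n)\cap L(B_0,\dots,B_\gamma)\cong Hull(GRS_k(\boldsymbol\alpha,\boldsymbol v))$ and the fact that $\mbox{rank}(A)=n$ via $\mathscr L(G_1\cap G_2)=\{0\}$) already contains all the ingredients. You merely make explicit the two routine steps the paper leaves implicit, namely the linear independence of $B_0,\dots,B_\gamma$ and the subspace dimension formula yielding $n+\gamma+1-\mbox{rank}(A|B)$.
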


If $2\deg d(z)+2>\varepsilon$ and $\deg s(z)\ge k+\deg d(z)+1$ or $2\deg d(z)+2\le\varepsilon$, we have similar formulas for the dimensions of hulls of GRS codes.

\begin{thm}
Let $GRS_k(\boldsymbol \alpha, \boldsymbol v)=C_{\mathscr L}(D,(k-1)P_{\infty}+(s(z)))$ and $GRS_k(\boldsymbol \alpha, \boldsymbol v)^\perp=C_{\mathscr L}(D,(n-k-1)P_{\infty}+(t(z)))$ be the two GRS codes in Propositions \ref{GRS AG} and \ref{GRS dual}, respectively.

(1) If $2\deg d(z)+2>\varepsilon$ and $\deg s(z)\ge k+\deg d(z)+1$, then
$$
\dim (Hull(GRS_k(\boldsymbol \alpha, \boldsymbol v)))=n+\gamma'+1-\mbox{rank}~(A'|B'),$$
where $\gamma'=\deg s(z)-k-\deg d(z)-1$, $A'$ and $B'$ are defined as follows.
\begin{equation}\label{A'}
A'=\begin{array}{ll}
&\begin{array}{c@{\hspace{-5pt}}l}
  \left(
 \begin{array}{cccc|cccc}
t_0 & 0 & \cdots &  0    & s_0 & 0 &  \cdots & 0 \\
t_1 & t_0 &  & \vdots & s_1 & s_0 &  & \vdots\\
\vdots &t_1 &  & t_0 & \vdots & s_1 & & 0\\
t_{\nu} & \vdots &  & t_1 & s_\mu &\vdots & & s_0\\
0 & t_{\nu} &   & \vdots     & 0 & s_\mu &  & s_1 \\
0 & 0 &   & t_{\nu} & 0 &0 &  &\vdots\\
\vdots & \vdots &  & \vdots &\vdots & \vdots& & \vdots\\
0 & 0 & \cdots & 0 &0  & 0 & \cdots & s_\mu
 \end{array}
 \right)
 \end{array}
\\
 &\begin{array}{ccc}
\quad\underbrace{\rule{25mm}{0mm}}_{k~~\text{columns}} & &\underbrace{\rule{25mm}{0mm}}_{{n-k}~~\text{columns}}
 \end{array}
\end{array}=(A'_1,\ldots, A'_n)
\end{equation}
is an $(n+\mu-k)\times n$ matrix, where each $A'_i$ is the $i$-th column vector of $A'$; and
\begin{equation}\label{B'}
 B'=(B'_0,\ldots, B'_{\gamma'})
 \end{equation}
is an $(n+\mu-k)\times (\gamma'+1)$ matrix,  where  $d(z)h(z)=\sum_{i=0}^{n+\delta}b_iz^i$ and each $B'_i$ is the $i$-th column vector of $B'$ defined by
\begin{equation}\label{b'}
B'_i=(\underbrace{0,\ldots, 0}_{i\ 0's}, b_0,b_1,\ldots, b_{n+\delta}, 0,\ldots,0)^{\mathrm T} \in \Bbb F_q^{n+\mu-k},\end{equation}
$i=0, 1,\ldots, \gamma'=\deg s(z)-k-\deg d(z)-1$.

(2) If $2\deg d(z)+2\le \varepsilon$ and $\deg s(z)<k+\frac{\varepsilon}{2}$, then
$$
\dim (Hull(GRS_k(\boldsymbol \alpha, \boldsymbol v)))=n+\gamma+1-\mbox{rank}~(A|B),$$
where $\gamma=k+\varepsilon-\deg s(z)-\deg d(z)-1$, $A$ and $B$ are defined by (\ref{A}) and (\ref{B}), respectively.

(3) If $2\deg d(z)+2\le \varepsilon$ and $\deg s(z)\ge k+\frac{\varepsilon}{2}$, then
$$
\dim (Hull(GRS_k(\boldsymbol \alpha, \boldsymbol v)))=n+\gamma'+1-\mbox{rank}~(A'|B'),$$
where $\gamma'=\deg s(z)-k-\deg d(z)-1$, $A'$ and $B'$ are defined by (\ref{A'}) and (\ref{B'}), respectively.
\end{thm}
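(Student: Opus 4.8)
The plan is to run the same two-step reduction used in the discussion preceding Theorem~\ref{main th2} and in its proof, changing only the degree bound in \eqref{mod h} and the argument playing the role of \textbf{Case~1}. Recall that an element of $Hull(GRS_k(\boldsymbol\alpha,\boldsymbol v))$ corresponds to a pair $(f(z),g(z))$ with $\deg f(z)\le k-1$, $\deg g(z)\le n-k-1$, and $f(z)t(z)-g(z)s(z)=r(z)h(z)$; since $\gcd(h(z),s(z)t(z))=1$ and $d(z)=\gcd(s(z),t(z))$ we get $d(z)\mid r(z)$, so $r(z)=d(z)\overline r(z)$ and the identity becomes $f(z)t(z)-g(z)s(z)=\overline r(z)d(z)h(z)$. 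Once the homogeneous identity $f(z)t(z)=g(z)s(z)$ is shown to have only the trivial solution, the pair $(f,g)$ attached to a given $\overline r(z)$ is unique, and the two $\Bbb F_q$-linear isomorphisms displayed before Algorithm~\ref{alg} identify $Hull(GRS_k(\boldsymbol\alpha,\boldsymbol v))$ with $L(A'_1,\dots,A'_n)\cap L(B'_0,\dots,B'_{\gamma'})$ in parts~(1) and~(3) (respectively with $L(A_1,\dots,A_n)\cap L(B_0,\dots,B_{\gamma})$ in part~(2)); hence $\dim Hull=\operatorname{rank}A'+\operatorname{rank}B'-\operatorname{rank}(A'|B')$ (resp.\ $\operatorname{rank}A+\operatorname{rank}B-\operatorname{rank}(A|B)$).

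First I would pin down which block structure occurs. In \eqref{mod h}, $\deg(f(z)t(z))\le k-1+\nu=k-1+n+\varepsilon-\mu$ and $\deg(g(z)s(z))\le n-k-1+\mu$, and the latter dominates exactly when $\mu=\deg s(z)\ge k+\tfrac{\varepsilon}{2}$. Part~(3) assumes $\mu\ge k+\tfrac{\varepsilon}{2}$ outright, and in part~(1) the hypotheses $2\delta+2>\varepsilon$ and $\mu\ge k+\delta+1$ force $\mu>k+\tfrac{\varepsilon}{2}$; so in both parts~(1) and~(3) all coefficient vectors lie in $\Bbb F_q^{\,n+\mu-k}$, one has $\deg r(z)\le\mu-k-1$, hence $\deg\overline r(z)\le\mu-k-\delta-1=\gamma'$. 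Writing $f(z)t(z)-g(z)s(z)=\overline r(z)d(z)h(z)$ in coordinates produces the coefficient matrix $A'$ of \eqref{A'} on the left and the shift matrix $B'$ of \eqref{B'}--\eqref{b'} on the right; from $\gamma'+(n+\delta)=n+\mu-k-1$ the $\gamma'+1$ columns of $B'$ just fill the $n+\mu-k$ rows and form the shift staircase of the nonzero polynomial $d(z)h(z)$, so $\operatorname{rank}B'=\gamma'+1$. In part~(2), $\mu<k+\tfrac{\varepsilon}{2}$ gives instead $\deg r(z)\le k+\varepsilon-\mu-1$, $\deg\overline r(z)\le\gamma$, coefficient vectors in $\Bbb F_q^{\,n+\varepsilon+k-\mu}$, and exactly the matrices $A,B$ of \eqref{A}, \eqref{B}, again with $\operatorname{rank}B=\gamma+1$.

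Next I would verify that $f(z)t(z)=g(z)s(z)$ forces $f=g=0$, i.e.\ $\mathscr L(G_1)\cap\mathscr L(G_2)=\mathscr L(G_1\cap G_2)=\{0\}$, equivalently $\deg(G_1\cap G_2)<0$, using the computation of $G_1\cap G_2$ from the proof of Theorem~\ref{main th}. When $\mu\ge k+\tfrac{\varepsilon}{2}$ (Case~2 there) $\deg(G_1\cap G_2)=\delta+k-1-\mu$: in part~(1) the hypothesis $\mu\ge k+\delta+1$ makes this $\le-2$, and in part~(3) the hypotheses $2\delta+2\le\varepsilon$ and $\mu\ge k+\tfrac{\varepsilon}{2}$ give $\mu>k+\delta$ (note $\delta\ge0$ forces $\varepsilon\ge2$ here), again $\le-2$. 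When $\mu<k+\tfrac{\varepsilon}{2}$ (Case~1 there, part~(2)) $\deg(G_1\cap G_2)=\delta+\mu-\varepsilon-k-1$, and $2\delta+2\le\varepsilon$ together with $\mu<k+\tfrac{\varepsilon}{2}$ forces $\mu+\delta\le k+\varepsilon-2$, so this is $\le-3$. In all cases $\deg(G_1\cap G_2)<0$, so the homogeneous system has only the zero solution; since also $\mu>k$ in parts~(1),(3) and $\mu\le k+\varepsilon$ in part~(2), the relevant left block ($A'$ or $A$) has at least $n$ rows and therefore full column rank $n$.

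Assembling everything, the two isomorphisms yield $\dim Hull(GRS_k(\boldsymbol\alpha,\boldsymbol v))=\dim\bigl(L(A'_1,\dots,A'_n)\cap L(B'_0,\dots,B'_{\gamma'})\bigr)=n+\gamma'+1-\operatorname{rank}(A'|B')$ in parts~(1) and~(3), and $=n+\gamma+1-\operatorname{rank}(A|B)$ in part~(2), which are the asserted formulas; the argument that $\Omega$ has a basis of polynomials of strictly increasing degree (induction on degree) carries over verbatim. I expect the only real friction to be the bookkeeping around \eqref{A'}: when $\mu\ge k+\tfrac{\varepsilon}{2}$ the ambient space has $n+\mu-k$ rows instead of $n+\varepsilon+k-\mu$, so the $t$-columns stop short of the bottom (the zeros in the lower-left of \eqref{A'}) while the $s$-columns run all the way down (the $s_\mu$ in the lower-right) --- precisely the difference between \eqref{A} and \eqref{A'} --- and one must check that this staircase is indeed the coordinate form of $f(z)t(z)-g(z)s(z)=\overline r(z)d(z)h(z)$ in the new regime, together with the handful of degree inequalities above that make the \textbf{Case~1} vanishing work; after that, it is routine linear algebra.
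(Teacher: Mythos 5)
Your proposal is correct and follows essentially the same route as the paper: reduce membership in the hull to the congruence $f(z)t(z)-g(z)s(z)=\overline r(z)d(z)h(z)$, verify $\deg(G_1\cap G_2)\le -2$ under each hypothesis so that the homogeneous system has only the zero solution and $\mbox{rank}(A')=n$ (resp. $\mbox{rank}(A)=n$), and then read off the dimension from the intersection of the two column spaces. The degree bookkeeping you carry out (in particular that parts (1) and (3) both land in the regime $\deg s(z)\ge k+\frac{\varepsilon}{2}$ with coefficient vectors in $\Bbb F_q^{n+\mu-k}$) matches the paper's argument, which likewise adapts the proof of the preceding theorem and omits the repeated details.
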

\begin{proof}
(1) %Similarly, we consider {\bf Case 1} and {\bf Case 2}.
By the proof of Theorem \ref{main th} and $\deg s(z)\ge k+\deg d(z)+1$,  $\deg(G_1\cap G_2)\le -2$ and $\mathscr L(G_1)\cap\mathscr L(G_2)=\{0\}$.
On the other hand, by $2\deg d(z)+2>\varepsilon$ and $\deg s(z)\ge k+\deg d(z)+1$, $\deg s(z)> k+\frac{\varepsilon}{2}$ and $\deg r(z)\le \deg s(z)-k-1$ in (\ref{mod h}).

Hence the equality  $t(z)f(z)=s(z)g(z)$ induces that the  system of equations has only zero solution:$$A'\left(\begin{array}{cccccccc}f_0&f_1&\cdots&f_{k-1}&-g_0&-g_1&\cdots&
 -g_{n-k-1}\end{array}\right)^{\mathrm T}=\boldsymbol 0,$$
 where $A'$ is an $(n+\mu-k)\times n$ matrix defined by (\ref{A'}). Then $\mbox{rank}(A')=n$ because the above system has only zero solution.

Let $r(z)=d(z)\overline r(z)$ in (\ref{mod h}). Then
$$
 f(z)t(z)-g(z)s(z)=\overline r(z)d(z)h(z).
$$

By $\deg r(z)\le \deg s(z)-k-1$, $0\le \deg \overline r(z)=\deg r(z)-\deg d(z)\le \deg s(z)-k-\deg d(z)-1=\gamma'$. Then let $d(z)h(z)=\sum_{i=0}^{n+\delta}b_iz^i$,
$$
B'_i=(\underbrace{0,\ldots, 0}_{i\ 0's}, b_0,b_1,\ldots, b_{n+\delta}, 0,\ldots,0)^{\mathrm T} \in \Bbb F_q^{n+\mu-k},
$$
where $i=0, 1,\ldots, \gamma'=\deg s(z)-k-\deg d(z)-1$,
and
$$
B'=(B'_0,\ldots, B'_{\gamma'})
$$
 an $(n+\mu-k)\times (\gamma'+1)$ matrix, where each $B'_i$ is the $i$-th column vector of $B'$.
The remainder of the proof is omitted because of the similarity.

(2) If $2\deg d(z)+2\le \varepsilon$ and $\deg s(z)<k+\frac{\varepsilon}{2}$, then $\deg s(z)<k+\varepsilon-\deg d(z)-1$ and by the proof of Theorem \ref{main th}, $\deg(G_1\cap G_2)=\deg d(z)+\deg s(z)-\varepsilon-k-1< -2$ and $\mathscr L(G_1)\cap\mathscr L(G_2)=\{0\}$. On the other hand, by $\deg s(z)< k+\frac{\varepsilon}{2}$, $\deg r(z)\le k+\varepsilon -\deg s(z)-1$ in (\ref{mod h}).

Hence the equality  $t(z)f(z)=s(z)g(z)$ induces that the  system of equations has only zero solution:$$A\left(\begin{array}{cccccccc}f_0&f_1&\cdots&f_{k-1}&-g_0&-g_1&\cdots&
 -g_{n-k-1}\end{array}\right)^{\mathrm T}=\boldsymbol 0,$$
 where $A$ is an $(n+\varepsilon +k-\mu)\times n$ matrix defined by (\ref{A}). Then $\mbox{rank}(A)=n$ because the above  system has only zero solution.

Let $r(z)=d(z)\overline r(z)$ in (\ref{mod h}). Then
$$
 f(z)t(z)-g(z)s(z)=\overline r(z)d(z)h(z).
$$

By $\deg r(z)\le k+\varepsilon -\deg s(z)-1$, $0\le \deg \overline r(z)=\deg r(z)-\deg d(z)\le k+\varepsilon -\deg s(z)-\deg d(z)-1=\gamma$. Then let $d(z)h(z)=\sum_{i=0}^{n+\delta}b_iz^i$ and
$$
B=(B_0,\ldots, B_{\gamma})
$$
 an $(n+\varepsilon +k-\mu)\times (\gamma+1)$ matrix, where each $B_i$ is the $i$-th column vector of $B$ defined by (\ref{b}).
The remainder of the proof is omitted because of the similarity.

(3) The proof of (3) is similar to the proof of (2), so we omit it.
\end{proof}

If $2\deg d(z)+2>\varepsilon$ and $\deg s(z)\ge k+\deg d(z)+1$, we can find a basis of $Hull(GRS_k(\boldsymbol \alpha, \boldsymbol v))$ by the following algorithm.
\begin{alg}\label{alg2}
If $2\deg d(z)+2>\varepsilon$ and $\deg s(z)\ge k+\deg d(z)+1$, then
we consider the matrix $$(A'|B')=(A'_1,\ldots, A'_n| B'_0,\ldots, B'_{\gamma'}),$$
where $\gamma'=\deg s(z)-k-\deg d(z)-1$, $A'$ and $B'$ are defined by (\ref{A'}) and (\ref{B'}), respectively.

(1) If there is the first smallest number $i_1$, $0\le i_1\le \gamma'$,  such that  columns $A'_1, \ldots, A'_n$, $B'_0, \ldots, B'_{i_1}$ are linearly dependent over $\Bbb F_q$, then there is $(\overline r_1(z), f_1(z), g_1(z))$ satisfying (\ref{g}).

(2) If there is the second smallest number $i_2$, $i_1<i_2\le \gamma'$, such that columns $A'_1, \ldots, A'_n$, $B'_0,\ldots, B'_{i_1-1}, B'_{i_1+1},\ldots, B'_{i_2}$ are linearly dependent over $\Bbb F_q$, then there is $(\overline r_2(z), f_2(z), g_2(z))$ satisfying (\ref{g}).

(3) And so on, there are polynomials  $f_1(z), f_2(z), \ldots, f_e(z)\in \Bbb F_q[z]$ such that
$$\left\{\left(\frac{f_i(\alpha_1)}{s(\alpha_1)},\ldots, \frac{f_i(\alpha_n)}{s(\alpha_n)}\right)~\Big|~ i=1,\ldots, e\right\}$$
is a basis of  $Hull(GRS_k(\boldsymbol \alpha, \boldsymbol v))$.
\end{alg}

\begin{rem}
If $2\deg d(z)+2\le \varepsilon$ and $\deg s(z)<k+\frac{\varepsilon}{2}$, then the algorithm for finding a basis of $Hull(GRS_k(\boldsymbol \alpha, \boldsymbol v))$ is just Algorithm \ref{alg}.

If $2\deg d(z)+2\le \varepsilon$ and $\deg s(z)\ge k+\frac{\varepsilon}{2}$, then the algorithm for finding a basis of $Hull(GRS_k(\boldsymbol \alpha, \boldsymbol v))$ is just Algorithm \ref{alg2}.
\end{rem}

\begin{cor}\label{lcd}
Let $GRS_k(\boldsymbol \alpha, \boldsymbol v)=C_{\mathscr L}(D,(k-1)P_{\infty}+(s(z)))$ and $GRS_k(\boldsymbol \alpha, \boldsymbol v)^\perp=C_{\mathscr L}(D,(n-k-1)P_{\infty}+(t(z)))$ be the two GRS codes in Propositions \ref{GRS AG} and \ref{GRS dual}, respectively.

If $2\deg d(z)+2>\varepsilon$ and $\deg s(z)\leq k+\varepsilon-\deg d(z)-1$ or $2\deg d(z)+2\le \varepsilon$ and $\deg s(z)<k+\frac{\varepsilon}{2}$, and the rank of the matrix $(A|B)$ is equal to $n+\gamma +1$, where $A$ and $B$ are defined by (\ref{A}) and (\ref{B}), respectively, then $Hull(GRS_k(\boldsymbol \alpha, \boldsymbol v))$ is an LCD code.

If $2\deg d(z)+2>\varepsilon$ and $\deg s(z)\ge k+\deg d(z)+1$ or $2\deg d(z)+2\le \varepsilon$ and $\deg s(z)\ge k+\frac{\varepsilon}{2}$, and the rank of the matrix $(A'|B')$ is equal to $n+\gamma'+1$, where $A'$ and $B'$ are defined by (\ref{A'}) and (\ref{B'}), respectively, then $Hull(GRS_k(\boldsymbol \alpha, \boldsymbol v))$ is an LCD code.
\end{cor}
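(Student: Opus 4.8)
The plan is to read the conclusion straight off the dimension formulas that have already been proved, so the argument is short. Recall that a linear code $\mathcal C$ over $\Bbb F_q$ is an LCD code exactly when $Hull(\mathcal C)=\{0\}$, i.e.\ when $\dim(Hull(\mathcal C))=0$. Hence in each of the stated situations it suffices to verify that the relevant hull dimension is zero, and each of those dimensions has already been expressed as (column count) $-$ (rank of an explicit matrix). The first step is therefore to split the hypothesis into the same cases as in the preceding results, and in each case substitute the given rank value into the corresponding formula.

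First suppose $2\deg d(z)+2>\varepsilon$ and $\deg s(z)\leq k+\varepsilon-\deg d(z)-1$. By Theorem \ref{main th2}, $\dim(Hull(GRS_k(\boldsymbol \alpha, \boldsymbol v)))=n+\gamma+1-\mbox{rank}(A|B)$ with $\gamma=k+\varepsilon-\deg s(z)-\deg d(z)-1$. The matrix $(A|B)=(A_1,\ldots,A_n\mid B_0,\ldots,B_{\gamma})$ has exactly $n+\gamma+1$ columns, so the hypothesis $\mbox{rank}(A|B)=n+\gamma+1$ simply says those columns are $\Bbb F_q$-linearly independent; plugging this in gives $\dim(Hull(GRS_k(\boldsymbol \alpha, \boldsymbol v)))=0$, i.e.\ $Hull(GRS_k(\boldsymbol \alpha, \boldsymbol v))=\{0\}$ and $GRS_k(\boldsymbol \alpha, \boldsymbol v)$ is LCD. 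The same computation applies verbatim when $2\deg d(z)+2\le\varepsilon$ and $\deg s(z)<\frac{\varepsilon}{2}+k$, because part (2) of the theorem immediately following Theorem \ref{main th2} furnishes the identical formula $n+\gamma+1-\mbox{rank}(A|B)$.

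Next suppose $2\deg d(z)+2>\varepsilon$ and $\deg s(z)\ge k+\deg d(z)+1$, or $2\deg d(z)+2\le\varepsilon$ and $\deg s(z)\ge\frac{\varepsilon}{2}+k$. Then parts (1) and (3) of that same theorem give $\dim(Hull(GRS_k(\boldsymbol \alpha, \boldsymbol v)))=n+\gamma'+1-\mbox{rank}(A'|B')$ with $\gamma'=\deg s(z)-k-\deg d(z)-1$, and $(A'|B')$ has $n+\gamma'+1$ columns. So the hypothesis $\mbox{rank}(A'|B')=n+\gamma'+1$ again forces $\dim(Hull(GRS_k(\boldsymbol \alpha, \boldsymbol v)))=0$, hence $GRS_k(\boldsymbol \alpha, \boldsymbol v)$ is LCD.

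I expect no real obstacle: the corollary is a direct specialization of the four dimension formulas, and the only point genuinely worth recording is the elementary observation that $n+\gamma+1$ (resp.\ $n+\gamma'+1$) is precisely the number of columns of $(A|B)$ (resp.\ $(A'|B')$), hence the maximal possible value of its rank, so the rank hypothesis is exactly the assertion that the hull collapses to $\{0\}$. If desired, one may also add the trivial remark that the two grouped hypotheses together exhaust all four sub-cases considered in Theorem \ref{main th2} and the theorem after it, which is immediate from the case split already made there; this makes it clear that no situation is left uncovered.
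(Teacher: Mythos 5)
Your proposal is correct and is exactly the argument the paper intends: the corollary is stated without proof as an immediate consequence of the dimension formulas $\dim(Hull)=n+\gamma+1-\operatorname{rank}(A|B)$ (resp.\ $n+\gamma'+1-\operatorname{rank}(A'|B')$), and substituting the full-rank hypothesis gives hull dimension zero. Your added remark that $n+\gamma+1$ is the number of columns of $(A|B)$, so the rank hypothesis is just column independence, is a harmless clarification and nothing more is needed.
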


\begin{rem}
Corollary \ref{lcd} shows that $Hull(GRS_k(\boldsymbol \alpha, \boldsymbol v))$  may be an LCD code if the conditions in Theorem \ref{main th} are not satisfied. Then there is a question:

If the conditions in Theorem \ref{main th} are not satisfied, is it possible that the hull of $GRS_k(\boldsymbol \alpha, \boldsymbol v)$ is a self-orthogonal code or a self-dual code?
\end{rem}

In the following, we shall provide some examples to show that the conditions that hulls of GRS codes are also GRS codes in Theorem \ref{main th} are good enough and can not be changed. It means that if the conditions in Theorem \ref{main th} are not satisfied, then hulls of GRS codes may be GRS codes or not GRS codes.
\begin{exa}\label{u=0,eg1}
Let $q=3^2$, $\Bbb F_q^*=\langle \gamma \rangle$, and $h(z)=z^7 + \gamma z^6 + \gamma^2z^5 + \gamma^3z^4 + 2z^3 + \gamma^5z^2 + \gamma^6z + \gamma^7\in \Bbb F_q[z]$.
Then $h'(z)=z^6 + \gamma^6 z^4 + \gamma^3 z^3 + \gamma z + \gamma^6
=(z^3 + \gamma^3z^2 + \gamma^5z + \gamma)(z^3+\gamma^7 z^2 + \gamma^3z + \gamma^5)\in \Bbb F_q[z]$ is an irreducible factorization over $\Bbb F_q$.

Let $s(z)=z^3 + \gamma^3z^2 + \gamma^5z + \gamma$ and $t(z)=z^3+\gamma^7 z^2 + \gamma^3z + \gamma^5\in \Bbb F_q[z]$. Then $\gcd(s(z),t(z))=1$. Let $\boldsymbol \alpha=(\alpha_1,\alpha_2,\ldots,\alpha_7)=
(1,\gamma^2,\gamma^3,\gamma^4,\gamma^5,\gamma^6,\gamma^7)$ and $\boldsymbol v=(v_1,v_2,\ldots,v_7)=(\gamma^2, \gamma^3, 1, \gamma^7, \gamma^3, \gamma^5, \gamma)$, where $\alpha_i$ are the roots of $h(z)$ and $v_i=(s(\alpha_i))^{-1},i=1,\ldots,7$. Then $GRS_5(\boldsymbol \alpha, \boldsymbol v)$ is a $[7,5,3]$ linear code, the column of the matrix $(A|B)$ is full rank, where $A$ and $B$ are defined by (\ref{A}) and (\ref{B}), respectively,
 and $Hull(GRS_k(\boldsymbol \alpha,\boldsymbol v))=\{\boldsymbol0\}$ by Magma \cite{m}.
\end{exa}
\begin{exa}\label{u=0,eg2}
Let $q=2^4$, $\Bbb F_q^*=\langle \gamma \rangle$, and $h(z)=z^{11} + \gamma^{13} z^{10} + \gamma^6 z^8 + \gamma^5z^7 + \gamma^7z^6 + \gamma^8z^5 + \gamma z^3 + z^2 + \gamma z\in \Bbb F_q[z]$. Then $h'(z)=z^{10} + \gamma^5 z^6 + \gamma^8z^4 + \gamma z^2 + \gamma=(z + \gamma^9)^2(z^2 + \gamma^8 z + \gamma^8)^2(z^2 + \gamma^{12}z + \gamma^6)^2\in \Bbb F_q[z]$ is an irreducible factorization over $\Bbb F_q$.

Let $s(z)=(z + \gamma^9)^2(z^2 + \gamma^8 z + \gamma^8)^2$ and $t(z)=(z^2 + \gamma^{12}z + \gamma^6)^2\in \Bbb F_q[z]$. Then $\gcd(s(z),t(z))=1$. Let $\boldsymbol \alpha=(\alpha_1,\alpha_2,\ldots,\alpha_{11})=
(\gamma^2,\gamma^3,\gamma^4,\gamma^5,\gamma^6,\gamma^8,\gamma^{10},
\gamma^{11},\gamma^{13},\gamma^{14},0)$ and $\boldsymbol v=(v_1,v_2,\ldots,v_{11})=(\gamma^8, \gamma^8, \gamma^8, \gamma^9, \gamma^{12}, \gamma^5, \gamma^5, 1, \gamma^5, \gamma^{14}, \gamma^{11})$, where $\alpha_i$ are the roots of $h(z)$ and $v_i=(s(\alpha_i))^{-1},i=1,\ldots,11$. Then $GRS_8(\boldsymbol \alpha, \boldsymbol v)$ is a $[11,8,4]$ linear code and $Hull(GRS_8(\boldsymbol \alpha,\boldsymbol v))$ is a $[11,1,10]$ linear code with a generator matrix
$$
\left(\begin{array}{ccccccccccc}
1 & \gamma^4 & 0 & \gamma^{12}& 1& \gamma^9& \gamma^9& \gamma^{14}& \gamma^{11}& \gamma^{14}& \gamma
\end{array}\right).
$$
\end{exa}
\begin{rem}
Note that in Examples \ref{u=0,eg1} and \ref{u=0,eg2}, $u(z)=0$ and the conditions in Theorem \ref{main th} are not satisfied. Moreover, Example \ref{u=0,eg2} shows that $Hull(GRS_8(\boldsymbol \alpha,\boldsymbol v))$ is not a GRS code as the third component of the $1$-dimensional generator matrix is zero, which further shows that the conditions in Theorem \ref{main th} are good enough.
\end{rem}

%If the conditions $\deg u(z)\leq 2\deg d(z)$ and $k+\deg u(z)-\deg d(z)-1<\deg s(z)<k+\deg d(z)+1$ in Theorem \ref{main th} are not satisfied, then the bound (\ref{bound}) about the hull dimension of GRS codes also holds. Specifically, we have the following theorem similar to Theorem \ref{case 2 th ns} and we omit the proof here.
\begin{exa}\label{eg1}
Let $q=5^2$, $\Bbb F_q^*=\langle \gamma \rangle$, and $h(z)=z^7 + z^6 + \gamma^{15}z^5 + \gamma^4z^4 + \gamma^{20}z^3 + \gamma^{13}z^2 + \gamma z + \gamma^{13}\in \Bbb F_q[z]$.
Then $zh(z)+h'(z)=z^8+z^7+\gamma^{17}z^6 + \gamma^{23}z^5 + \gamma^{20}z^4 + \gamma^{23}z^3 +2z^2 + \gamma^7z + \gamma
=(z+\gamma)(z+\gamma^2)(z+\gamma^{22})(z^5 + \gamma^4z^4 + z^3 + 4z^2 + \gamma z + 1)\in \Bbb F_q[z]$ is an irreducible factorization over $\Bbb F_q$.

Let $s(z)=(z+\gamma)(z+\gamma^2)(z+\gamma^{22})$ and $t(z)=z^5 + \gamma^4z^4 + z^3 + 4z^2 + \gamma z + 1\in \Bbb F_q[z]$. Then $\gcd(s(z),t(z))=1$. Let $\boldsymbol \alpha=(\alpha_1,\alpha_2,\ldots,\alpha_7)=
(1,\gamma,\gamma^2,\gamma^4,\gamma^5,\gamma^6,\gamma^7)$ and $\boldsymbol v=(v_1,v_2,\ldots,v_7)=(3, \gamma^{10}, \gamma^{20}, \gamma^2, 2, 2, \gamma^4)$, where $\alpha_i$ are the roots of $h(z)$ and $v_i=(s(\alpha_i))^{-1},i=1,\ldots,7$. Then $GRS_3(\boldsymbol \alpha, \boldsymbol v)$ is a $[7,3,5]$ linear code,
 the column of the matrix $(A|B)$ is full rank, where $A$ and $B$ are defined by (\ref{A}) and (\ref{B}), respectively,
 and $Hull(GRS_k(\boldsymbol \alpha,\boldsymbol v))=\{\boldsymbol0\}$ by Magma \cite{m}.
\end{exa}
\begin{exa}\label{eg2}
Let $q=2^4$, $\Bbb F_q^*=\langle \gamma \rangle$, and $h(z)=z^{11} + \gamma^{12}z^8 + \gamma^{11}z^7 + \gamma^9z^5 + \gamma^7z^3 + \gamma^6z^2 + \gamma^5z + \gamma^4 \in \Bbb F_q[z]$. Then $h(z)+h'(z)=z^{11}+z^{10} +\gamma^{12}z^8+\gamma^{11}z^7 + \gamma^{11}z^6 + \gamma^9z^5 +\gamma^9z^4+ \gamma^7z^3+\gamma^{10}z^2 + \gamma^5z+\gamma^8=(z + 1)(z^2 + \gamma^3z + 1)(z^8 + \gamma^3z^7 + \gamma^{13}z^6 + \gamma^8z^5 + \gamma z^4 + \gamma^{14}z^3 + \gamma^4z^2 + \gamma^{13}z +\gamma^8)\in \Bbb F_q[z]$ is an irreducible factorization over $\Bbb F_q$.

Let $s(z)=(z + 1)(z^2 + \gamma^3z + 1)$ and $t(z)=z^8 + \gamma^3z^7 + \gamma^{13}z^6 + \gamma^8z^5 + \gamma z^4 + \gamma^{14}z^3 + \gamma^4z^2 + \gamma^{13}z +\gamma^8\in \Bbb F_q[z]$. Then $\gcd(s(z),t(z))=1$. Let $\boldsymbol \alpha=(\alpha_1,\alpha_2,\ldots,\alpha_{11})=
(\gamma^{11},\gamma^{12},\gamma^2,\gamma^{13},\gamma^{14},\gamma^4,\gamma^{5},
\gamma^{6},\gamma^{8},\gamma^{9},\gamma^{10})$ and $\boldsymbol v=(v_1,v_2,\ldots,v_{11})=(\gamma^{14}, \gamma^{10}, \gamma^5, \gamma^{11}, \\ \gamma^9, \gamma^2, \gamma, 1, 1, \gamma^3, \gamma)$, where $\alpha_i$ are the roots of $h(z)$ and $v_i=(s(\alpha_i))^{-1},i=1,\ldots,11$. Then $GRS_4(\boldsymbol \alpha, \boldsymbol v)$ is a $[11,4,8]$ linear code and $Hull(GRS_4(\boldsymbol \alpha,\boldsymbol v))$ is a $[11,1,10]$ linear code with a generator matrix
$$
\left(\begin{array}{ccccccccccc}
0 & 1 & 1 & \gamma^{10}& \gamma^9& \gamma^{14}& \gamma^8& \gamma^7& \gamma^6& \gamma^{11}&\gamma^{13}
\end{array}\right).
$$
\end{exa}
\begin{rem}
Note that in Examples \ref{eg1} and \ref{eg2}, $u(z)\ne 0$ and the conditions in Theorem \ref{main th} are not satisfied. Moreover, Example \ref{eg2} shows that $Hull(GRS_4(\boldsymbol \alpha,\boldsymbol v))$ is not a GRS code as the first component of the $1$-dimensional generator matrix is zero, which further shows that the conditions in Theorem \ref{main th} are good enough.
\end{rem}
\section{Self-orthogonal and self-dual GRS codes}
In Section 3, Theorem \ref{main th} gives the conditions that hulls of GRS codes are also GRS codes. In this section, we shall give some corollaries of Theorem \ref{main th}. Specifically, we shall provide methods for constructing self-orthogonal or self-dual GRS codes. Besides, we give some examples to support our findings by Magma \cite{m}.

\begin{cor}\label{th appl}
Let $GRS_k(\boldsymbol \alpha, \boldsymbol v)=C_{\mathscr L}(D,(k-1)P_{\infty}+(s(z)))$ and $GRS_k(\boldsymbol \alpha, \boldsymbol v)^\perp=C_{\mathscr L}(D,(n-k-1)P_{\infty}+(t(z)))$ be the two GRS codes in Propositions \ref{GRS AG} and \ref{GRS dual}, respectively. Let $2\deg d(z)+2> \varepsilon$.

(1) If $k+\varepsilon-\deg d(z)-1 < \deg s(z) < k+\frac {\varepsilon}2$ and $\deg s(z)- (k+\varepsilon-\deg d(z)-1)=1$, then $GRS_k(\boldsymbol \alpha, \boldsymbol v)$ is an LCD code; if $k+\frac {\varepsilon}2 \leq \deg s(z) < k+\deg d(z)+1$ and $k+\deg d(z)+1-\deg s(z)=1$, then $GRS_k(\boldsymbol \alpha, \boldsymbol v)$ is an LCD code.

(2) If $k+\frac {\varepsilon}2 \leq \deg s(z)<k+\deg d(z)+1$ and $\gcd(s(z),t(z))=s(z)$, then $GRS_k(\boldsymbol \alpha, \boldsymbol v)$ is a self-orthogonal code. Moreover, $t(z)=\lambda(z) s(z)$ and Equation (\ref{im eq})  is  changed to
\begin{equation} \label{appl eq1}
\lambda(z)s(z)^2=u(z)h(z)+ h'(z),
\end{equation}
 where $\lambda(z)\in \Bbb F_q[z]$ is a polynomial.

 If $k+\varepsilon-\deg d(z)-1 < \deg s(z)< k+\frac {\varepsilon}2$ and $\gcd(s(z),t(z))=t(z)$, then $GRS_k(\boldsymbol \alpha, \boldsymbol v)$ is a dual-containing code.

(3) If $k+\frac {\varepsilon}2 \leq \deg s(z)<k+\deg d(z)+1$, $\gcd(s(z),t(z))=s(z)$, and $n=2k$, then $GRS_k(\boldsymbol \alpha, \boldsymbol v)$ is a self-dual code. Moreover, $t(z)=\lambda s(z)$ and Equation (\ref{im eq}) is changed to
\begin{equation} \label{appl eq2}
 \lambda s(z)^2=u(z)h(z)+ h'(z),
\end{equation}
where $\lambda\in \Bbb F_q^*$, $u(z)\in \Bbb F_q[z]$ is a nonzero polynomial with $0\leq \deg u(z)\leq n-2$ and $\deg u(z)$ is even.
\end{cor}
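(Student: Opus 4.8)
The plan is to read all three parts directly off Theorem \ref{main th}, which already outputs the hull as an explicit rational AG code $C_{\mathscr L}(D,G)$ whose divisor $G$ is the principal divisor $(d(z))$ plus an integer multiple of $P_\infty$. I will use two facts repeatedly: a rational AG code $C_{\mathscr L}(D,G)$ equals $\{\boldsymbol{0}\}$ exactly when $\deg G<0$ (Proposition \ref{prop para}(1)), and since principal divisors have degree $0$ the degree of $(d(z))+mP_\infty$ is simply $m$; and when $G$ is $(k-1)P_\infty+(s(z))$ or $(n-k-1)P_\infty+(t(z))$ the code is $GRS_k(\boldsymbol \alpha, \boldsymbol v)$ or $GRS_k(\boldsymbol \alpha, \boldsymbol v)^\perp$ by Propositions \ref{GRS AG} and \ref{GRS dual}. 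With these, each assertion reduces to a one-line evaluation of the $P_\infty$-coefficient appearing in (\ref{th2 eq1}) or (\ref{th2 eq2}), using $\deg s(z)+\deg t(z)=n+\varepsilon$.

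For part (1) I substitute the hypothesis into the divisor given by Theorem \ref{main th}. In the first clause $\deg s(z)=k+\varepsilon-\deg d(z)$, so the $P_\infty$-coefficient $\deg d(z)+\deg s(z)-k-\varepsilon-1$ of (\ref{th2 eq2}) collapses to $-1$; hence the hull is $C_{\mathscr L}(D,(d(z))-P_\infty)$, a code attached to a divisor of degree $-1<0$, so $Hull(GRS_k(\boldsymbol \alpha, \boldsymbol v))=\{\boldsymbol{0}\}$ and $GRS_k(\boldsymbol \alpha, \boldsymbol v)$ is LCD. In the second clause $\deg s(z)=k+\deg d(z)$ makes the $P_\infty$-coefficient $k+\deg d(z)-\deg s(z)-1$ of (\ref{th2 eq1}) equal to $-1$, and the same conclusion follows.

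For part (2), $\gcd(s(z),t(z))=s(z)$ gives $d(z)=s(z)$, so substituting $\deg d(z)=\deg s(z)$ in (\ref{th2 eq1}) yields $Hull(GRS_k(\boldsymbol \alpha, \boldsymbol v))=C_{\mathscr L}(D,(s(z))+(k-1)P_\infty)$, which is $GRS_k(\boldsymbol \alpha, \boldsymbol v)$ itself by Proposition \ref{GRS AG}; hence $GRS_k(\boldsymbol \alpha, \boldsymbol v)\subseteq GRS_k(\boldsymbol \alpha, \boldsymbol v)^\perp$, and writing $t(z)=\lambda(z)s(z)$ from $s(z)\mid t(z)$ turns (\ref{im eq}) into (\ref{appl eq1}). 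The dual-containing clause is the mirror case: $\gcd(s(z),t(z))=t(z)$ gives $d(z)=t(z)$, and the $P_\infty$-coefficient of (\ref{th2 eq2}) becomes $\deg t(z)+\deg s(z)-k-\varepsilon-1=(n+\varepsilon)-k-\varepsilon-1=n-k-1$, so $Hull(GRS_k(\boldsymbol \alpha, \boldsymbol v))=C_{\mathscr L}(D,(t(z))+(n-k-1)P_\infty)=GRS_k(\boldsymbol \alpha, \boldsymbol v)^\perp$ by Proposition \ref{GRS dual}, whence $GRS_k(\boldsymbol \alpha, \boldsymbol v)^\perp\subseteq GRS_k(\boldsymbol \alpha, \boldsymbol v)$.

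For part (3), self-orthogonality from part (2) together with $n=2k$ and $\dim GRS_k(\boldsymbol \alpha, \boldsymbol v)=k=n-k=\dim GRS_k(\boldsymbol \alpha, \boldsymbol v)^\perp$ forces $GRS_k(\boldsymbol \alpha, \boldsymbol v)=GRS_k(\boldsymbol \alpha, \boldsymbol v)^\perp$. The only step requiring a little care, which I regard as the main obstacle, is showing that $\lambda(z)$ is a nonzero constant and $\deg u(z)$ is even; the point is that both are forced by the numerical hypotheses rather than needing extra structure. From $s(z)\mid t(z)$ one has $\deg s(z)\le\deg t(z)$, while the hypothesis $\deg s(z)\ge k+\frac{\varepsilon}{2}$ together with $n=2k$ reads $\deg s(z)\ge k+\frac{\deg s(z)+\deg t(z)-2k}{2}=\frac{\deg s(z)+\deg t(z)}{2}$, i.e. $\deg s(z)\ge\deg t(z)$; hence $\deg s(z)=\deg t(z)$, so $\lambda(z)=t(z)/s(z)$ has degree $0$, i.e. $\lambda(z)=\lambda\in\Bbb F_q^*$, and (\ref{appl eq1}) becomes (\ref{appl eq2}). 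Finally $u(z)\ne 0$: otherwise $\varepsilon=-1$ by (\ref{v}) and $\lambda s(z)^2=h'(z)$ would give $2\deg s(z)=\deg h'(z)=n-1=2k-1$, impossible; consequently $\deg u(z)=\varepsilon=\deg s(z)+\deg t(z)-n=2\deg s(z)-2k$ is even with $0\le\deg u(z)\le n-2$ by Proposition \ref{GRS dual}, which completes the proof.
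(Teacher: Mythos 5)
Your proposal is correct and follows essentially the same route as the paper: read the hull off Theorem \ref{main th}, note the $P_\infty$-coefficient becomes $-1$ in part (1) so the divisor has negative degree and the hull vanishes, identify the hull divisor with $G_1$ (resp. $G_2$) when $d(z)=s(z)$ (resp. $d(z)=t(z)$) in part (2), and in part (3) combine self-orthogonality with $n=2k$ and a degree count forcing $\deg\lambda(z)=0$, $u(z)\neq 0$, and $\deg u(z)=\varepsilon$ even. The only cosmetic differences are that you spell out the dual-containing computation (which the paper leaves implicit) and derive $\deg\lambda(z)=0$ from $\deg s(z)=\deg t(z)$ rather than from $2\deg s(z)=n+\varepsilon-\deg\lambda(z)$; these are equivalent.
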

\begin{proof}
(1) If $k+\varepsilon-\deg d(z)-1 < \deg s(z)<k+\frac {\varepsilon}2$, then Equation (\ref{th2 eq2}) holds by Theorem \ref{main th}.
Moreover, if $\deg s(z)-(k+\varepsilon-\deg d(z)-1)=1$,
$$
\deg((d(z))+(\deg d(z)+\deg s(z)-k-\varepsilon-1)P_{\infty})=\deg d(z)+\deg s(z)-k-\varepsilon-1=-1<0.
$$
Therefore, by Proposition \ref{prop para}, $\dim(Hull(GRS_k(\boldsymbol \alpha, \boldsymbol v)))=0.$
Another result can be proved similarly.

(2) If $k+\frac {\varepsilon}2 \leq \deg s(z)<k+\deg d(z)+1$, then Equation (\ref{th2 eq1}) holds by Theorem \ref{main th}.
Moreover, if $\gcd(s(z),t(z))=s(z)$, then by (\ref{th2 eq1}),
$$
Hull(GRS_k(\boldsymbol \alpha, \boldsymbol v))=C_{\mathscr L}(D,(s(z))+(k-1)P_{\infty})=GRS_k(\boldsymbol \alpha, \boldsymbol v),
$$
which infers that $GRS_k(\boldsymbol \alpha, \boldsymbol v)\subseteq GRS_k(\boldsymbol \alpha, \boldsymbol v)^\perp$, i.e., $GRS_k(\boldsymbol \alpha, \boldsymbol v)$ is a self-orthogonal code. By $\gcd(s(z),t(z))=s(z)$, we can assume that $t(z)=\lambda(z)s(z)$ for some polynomial $\lambda(z)\in \Bbb F_q[z]$, then Equation (\ref{im eq}) can be rewritten as
$$
\lambda(z)s(z)^2=u(z)h(z)+ h'(z).
$$

If $k+\varepsilon-\deg d(z)-1 < \deg s(z)< k+\frac {\varepsilon}2$ and $\gcd(s(z),t(z))=t(z)$, then $GRS_k(\boldsymbol \alpha, \boldsymbol v)^\perp\subseteq GRS_k(\boldsymbol \alpha, \boldsymbol v)$, i.e., $GRS_k(\boldsymbol \alpha, \boldsymbol v)$ is a dual-containing code.

(3) If $k+\frac {\varepsilon}2 \leq \deg s(z)<k+\deg d(z)+1$, $\gcd(s(z),t(z))=s(z)$, and $n=2k$, then $GRS_k(\boldsymbol \alpha, \boldsymbol v)$ is a self-orthogonal code by (2) and
$$
\dim(Hull(GRS_k(\boldsymbol \alpha, \boldsymbol v)))=\dim(GRS_k(\boldsymbol \alpha, \boldsymbol v))=k=\frac{n}{2},
$$
which infers that $GRS_k(\boldsymbol \alpha, \boldsymbol v)$ is a self-dual code. Moreover, by (\ref{appl eq1}),
$$
2\deg s(z)=n+\varepsilon-\deg \lambda(z)\leq 2k+\varepsilon.
$$
On the other hand, $2\deg s(z) \geq 2k+\varepsilon$ by $k+\frac {\varepsilon}2 \leq \deg s(z)$, then $2\deg s(z) = 2k+\varepsilon$ and $\deg \lambda(z)=0$.
Besides, by $\varepsilon=2(\deg s(z)-k)$, $u(z)$ is a nonzero polynomial of an even degree (otherwise, $u(z)=0$ and $\varepsilon=-1$, which is contradictory).
 Therefore, we obtain Equation (\ref{appl eq2}).
\end{proof}
\begin{rem}
Corollary \ref{th appl} shows self-orthogonal and self-dual GRS codes. Specifically, we can select polynomials $h(z),u(z)\in \Bbb F_q[z]$ with $\deg u(z)\leq n-2$ such that
$$
u(z)h(z)+ h'(z)=\lambda(z)s(z)^2
$$
for some polynomials $\lambda(z)$ and $s(z)\in \Bbb F_q[z]$ with $\deg s(z)\geq k+\frac{\varepsilon}{2}$. Then let $\boldsymbol \alpha=(\alpha_1,\ldots,\alpha_n)$, where $\alpha_i$ are the roots of $h(z)$, $i=1,\ldots,n$. Let $\boldsymbol v=(v_1,\ldots,v_n)$, where $v_i=(s(\alpha_i))^{-1},i=1,\ldots,n$. By Corollary \ref{th appl} (2), $GRS_k(\boldsymbol \alpha, \boldsymbol v)$ is a self-orthogonal code.

On the other hand, we can select polynomials $h(z),u(z)\in \Bbb F_q[z]$ with $0\leq \deg u(z)\leq n-2$  such that
$$
u(z)h(z)+ h'(z)=\lambda s(z)^2
$$
for $\lambda\in \Bbb F_q^*$ and some polynomial $s(z)\in \Bbb F_q[z]$ with $\deg s(z)\geq k+\frac{\varepsilon}{2}$. Then let $\boldsymbol \alpha=(\alpha_1,\ldots,\alpha_n)$, where $\alpha_i$ are the roots of $h(z)$, $i=1,\ldots,n$. Let $\boldsymbol v=(v_1,\ldots,v_n)$, where $v_i=(s(\alpha_i))^{-1},i=1,\ldots,n$. By Corollary \ref{th appl} (3), $GRS_k(\boldsymbol \alpha, \boldsymbol v)$ is a self-dual code.
\end{rem}

\begin{cor}
In (\ref{im eq}), let $u(z)=0$. If $k-\deg d(z)-2<\deg s(z)<k+\deg d(z)+1$, then the hull of $GRS_k(\boldsymbol \alpha, \boldsymbol v)$ is also a GRS code. Moreover,
if $k-\deg d(z)-2 < \deg s(z)< k$, then
$$
Hull(GRS_k(\boldsymbol \alpha, \boldsymbol v))=C_{\mathscr L}(D,(d(z))+(\deg d(z)+\deg s(z)-k)P_{\infty});
$$
if $k \leq \deg s(z)<k+\deg d(z)+1$, then
$$
Hull(GRS_k(\boldsymbol \alpha, \boldsymbol v))=C_{\mathscr L}(D,(d(z))+(k+\deg d(z)-\deg s(z)-1)P_{\infty}).
$$
\end{cor}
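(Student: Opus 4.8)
The plan is to specialize Theorem \ref{main th} to the case $u(z)=0$. By the definition (\ref{v}) of $\varepsilon$, the hypothesis $u(z)=0$ forces $\varepsilon=-1$; in particular the side condition $2\deg d(z)+2>\varepsilon$ that Theorem \ref{main th} requires holds automatically, since $\deg d(z)\geq 0$ gives $2\deg d(z)+2\geq 2>-1$. Substituting $\varepsilon=-1$, the hypothesis $k+\varepsilon-\deg d(z)-1<\deg s(z)<k+\deg d(z)+1$ of Theorem \ref{main th} becomes exactly $k-\deg d(z)-2<\deg s(z)<k+\deg d(z)+1$, so under the hypothesis of the corollary Theorem \ref{main th} already guarantees that $Hull(GRS_k(\boldsymbol \alpha, \boldsymbol v))$ is a GRS code. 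It then only remains to read off the explicit divisor in the two regimes.

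The threshold appearing in Theorem \ref{main th} is $\deg s(z)=k+\frac{\varepsilon}{2}=k-\frac12$. Since $\deg s(z)$ is an integer, the inequality $\deg s(z)<k+\frac{\varepsilon}{2}=k-\frac12$ is equivalent to $\deg s(z)\leq k-1$, i.e.\ to $\deg s(z)<k$, while $\deg s(z)\geq k+\frac{\varepsilon}{2}=k-\frac12$ is equivalent to $\deg s(z)\geq k$. Hence the two cases of Theorem \ref{main th} translate, when $u(z)=0$, into precisely the two cases $k-\deg d(z)-2<\deg s(z)<k$ and $k\leq\deg s(z)<k+\deg d(z)+1$ stated in the corollary.

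In the first regime $k-\deg d(z)-2<\deg s(z)<k$, Equation (\ref{th2 eq2}) gives
$$
Hull(GRS_k(\boldsymbol \alpha, \boldsymbol v))=C_{\mathscr L}(D,(d(z))+(\deg d(z)+\deg s(z)-k-\varepsilon-1)P_{\infty}),
$$
and substituting $\varepsilon=-1$ yields $\deg d(z)+\deg s(z)-k-\varepsilon-1=\deg d(z)+\deg s(z)-k$, which is the asserted expression. In the second regime $k\leq\deg s(z)<k+\deg d(z)+1$, Equation (\ref{th2 eq1}) reads
$$
Hull(GRS_k(\boldsymbol \alpha, \boldsymbol v))=C_{\mathscr L}(D,(d(z))+(k+\deg d(z)-\deg s(z)-1)P_{\infty}),
$$
in which $\varepsilon$ does not occur, so it already matches the corollary; merging the two ranges gives the full interval $k-\deg d(z)-2<\deg s(z)<k+\deg d(z)+1$. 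There is no genuine obstacle here: the whole statement is a substitution $\varepsilon=-1$ into Theorem \ref{main th}, and the only point demanding a line of care is the conversion of the half-integer bounds $k\pm\frac12$ produced by $\varepsilon=-1$ into the integer inequalities of the statement, together with the (trivial) verification that the auxiliary hypothesis $2\deg d(z)+2>\varepsilon$ is vacuous in this case.
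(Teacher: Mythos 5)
Your proposal is correct and is exactly the derivation the paper intends (the corollary is stated without proof as a specialization of Theorem \ref{main th}): setting $u(z)=0$ gives $\varepsilon=-1$, the side condition $2\deg d(z)+2>\varepsilon$ is automatic, and the half-integer thresholds $k\pm\tfrac12$ correctly round to the integer inequalities $\deg s(z)<k$ and $\deg s(z)\geq k$. The substitution into the two displayed formulas of Theorem \ref{main th} is carried out accurately, so nothing is missing.
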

\begin{exa}
Let $q=2^3$, $\Bbb F_q^*=\langle \gamma \rangle$, and $h(z)=z^7-1=\prod_{i=0}^6(z-\gamma^i)\in \Bbb F_q[z]$. Then $h'(z)=z^6\in \Bbb F_q[z]$.

(1) Let $s(z)=z^2$ and $t(z)=z^4\in \Bbb F_q[z]$. Then $\gcd(s(z),t(z))=s(z)$. Let $\boldsymbol \alpha=(\alpha_1=\gamma^0,\alpha_2=\gamma^1,\ldots,\alpha_7=\gamma^6)$ and $\boldsymbol v=(v_1,\ldots,v_7)$, where $v_i=(s(\alpha_i))^{-1}=\gamma^{-2(i-1)},i=1,\ldots,7$. By Corollary \ref{th appl} (2), if $1\leq k\leq 2$, then $GRS_k(\boldsymbol \alpha, \boldsymbol v)$ is a self-orthogonal code.
Specifically, $Hull(GRS_1(\boldsymbol \alpha, \boldsymbol v))=GRS_1(\boldsymbol \alpha, \boldsymbol v)$ are both $[7, 1, 7]$ GRS codes over $\Bbb F_q$, i.e., $GRS_1(\boldsymbol \alpha, \boldsymbol v)$ is a self-orthogonal code over $\Bbb F_q$, and they have a generator matrix as follows:
$$
\left(\begin{array}{ccccccc}
1 & \gamma^5 & \gamma^3 & \gamma & \gamma^6 & \gamma^4 & \gamma^2
\end{array}\right).
$$
$Hull(GRS_2(\boldsymbol \alpha, \boldsymbol v))=GRS_2(\boldsymbol \alpha, \boldsymbol v)$ are both $[7, 2, 6]$ GRS codes over $\Bbb F_q$, i.e., $GRS_2(\boldsymbol \alpha, \boldsymbol v)$ is a self-orthogonal code over $\Bbb F_q$, and they have a generator matrix as follows:
$$
\left(\begin{array}{ccccccc}
1 & 0 & \gamma^4 & \gamma^5 & \gamma^5 & 1 & \gamma^4\\
0 & 1 & \gamma & \gamma & \gamma^3 & 1 & \gamma^3
\end{array}\right).
$$

(2) Let $s(z)=z^4$ and $t(z)=z^2\in \Bbb F_q[z]$. Then $\gcd(s(z),t(z))=t(z)$. Let $\boldsymbol \alpha=(\alpha_1=\gamma^0,\alpha_2=\gamma^1,\ldots,\alpha_7=\gamma^6)$ and $\boldsymbol v=(v_1,\ldots,v_7)$, where $v_i=(s(\alpha_i))^{-1}=\gamma^{-4(i-1)},i=1,\ldots,7$. By Corollary \ref{th appl} (2), if $4<k<7$, then $GRS_k(\boldsymbol \alpha, \boldsymbol v)$ is a dual-containing code.
Specifically, $Hull(GRS_5(\boldsymbol \alpha, \boldsymbol v))=GRS_5(\boldsymbol \alpha, \boldsymbol v)^\perp$ are both $[7, 2, 6]$ GRS codes over $\Bbb F_q$, i.e., $GRS_5(\boldsymbol \alpha, \boldsymbol v)^\perp$ is a self-orthogonal code over $\Bbb F_q$, and they have a generator matrix as follows:
$$
\left(\begin{array}{ccccccc}
1 & 0 & \gamma^4 & \gamma^5 & \gamma^5 & 1 & \gamma^4\\
0 & 1 & \gamma & \gamma & \gamma^3 & 1 & \gamma^3
\end{array}\right).
$$
$Hull(GRS_6(\boldsymbol \alpha, \boldsymbol v))=GRS_6(\boldsymbol \alpha, \boldsymbol v)^\perp$ are both $[7, 1, 7]$ GRS codes over $\Bbb F_q$, i.e., $GRS_6(\boldsymbol \alpha, \boldsymbol v)^\perp$ is a self-orthogonal code over $\Bbb F_q$, and they have a generator matrix as follows:
$$
\left(\begin{array}{ccccccc}
1 & \gamma^5 & \gamma^3 & \gamma & \gamma^6 & \gamma^4 & \gamma^2
\end{array}\right).
$$
\end{exa}
\begin{cor}
In (\ref{im eq}), let $\deg u(z)=0$. If $k-\deg d(z)-1<\deg s(z)<k+\deg d(z)+1$, then the hull of $GRS_k(\boldsymbol \alpha, \boldsymbol v)$ is also a GRS code. Moreover,
$$
Hull(GRS_k(\boldsymbol \alpha, \boldsymbol v))=C_{\mathscr L}(D,(d(z))+(\deg d(z)-1-\lvert\deg s(z)-k\rvert)P_{\infty}).
$$
\end{cor}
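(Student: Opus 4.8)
The plan is to deduce this corollary as an immediate specialization of Theorem \ref{main th} to the case $\varepsilon=0$. First I would observe that the hypothesis $\deg u(z)=0$ forces $u(z)\ne 0$, so by the definition (\ref{v}) of $\varepsilon$ we have $\varepsilon=\deg u(z)=0$. In particular the standing requirement $2\deg d(z)+2>\varepsilon$ of Theorem \ref{main th} holds trivially, since $\deg d(z)\ge 0$ gives $2\deg d(z)+2\ge 2>0=\varepsilon$.

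Next I would substitute $\varepsilon=0$ into both the hypotheses and the conclusions of Theorem \ref{main th}. The combined range $k+\varepsilon-\deg d(z)-1<\deg s(z)<k+\deg d(z)+1$ becomes exactly $k-\deg d(z)-1<\deg s(z)<k+\deg d(z)+1$, which is the hypothesis of the corollary; hence Theorem \ref{main th} already guarantees that $Hull(GRS_k(\boldsymbol \alpha,\boldsymbol v))$ is a GRS code. It then remains only to rewrite the two expressions (\ref{th2 eq2}) and (\ref{th2 eq1}) into the single uniform formula with $\lvert\deg s(z)-k\rvert$.

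I would then split into the two cases of Theorem \ref{main th}. In the case $\deg s(z)<k+\tfrac{\varepsilon}{2}=k$, equivalently $k-\deg d(z)-1<\deg s(z)<k$, formula (\ref{th2 eq2}) gives $Hull(GRS_k(\boldsymbol \alpha,\boldsymbol v))=C_{\mathscr L}(D,(d(z))+(\deg d(z)+\deg s(z)-k-\varepsilon-1)P_\infty)=C_{\mathscr L}(D,(d(z))+(\deg d(z)+\deg s(z)-k-1)P_\infty)$; since $\deg s(z)-k<0$ we have $\lvert\deg s(z)-k\rvert=k-\deg s(z)$, so the coefficient of $P_\infty$ equals $\deg d(z)-1-\lvert\deg s(z)-k\rvert$. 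In the case $\deg s(z)\ge k+\tfrac{\varepsilon}{2}=k$, equivalently $k\le\deg s(z)<k+\deg d(z)+1$, formula (\ref{th2 eq1}) gives $Hull(GRS_k(\boldsymbol \alpha,\boldsymbol v))=C_{\mathscr L}(D,(d(z))+(k+\deg d(z)-\deg s(z)-1)P_\infty)$; since $\deg s(z)-k\ge 0$ we have $\lvert\deg s(z)-k\rvert=\deg s(z)-k$, so the coefficient of $P_\infty$ again equals $\deg d(z)-1-\lvert\deg s(z)-k\rvert$. Combining the two cases yields the claimed formula.

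Since the argument is a pure specialization of an already-established theorem, I do not expect a genuine obstacle; the only point requiring care is checking that the boundary value $\deg s(z)=k$ is handled by the second case (it is, because of the $\le$ in $k+\tfrac{\varepsilon}{2}\le\deg s(z)$), so that the two cases partition the admissible range $k-\deg d(z)-1<\deg s(z)<k+\deg d(z)+1$ with neither gap nor overlap, and the two formulas glue together into the single expression stated.
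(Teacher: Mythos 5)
Your proposal is correct and is exactly the intended argument: the paper states this corollary without proof as an immediate specialization of Theorem \ref{main th} to $\varepsilon=\deg u(z)=0$, and your case split at $\deg s(z)=k$ together with the rewriting of the two $P_\infty$-coefficients via $\lvert\deg s(z)-k\rvert$ is the routine verification the authors left implicit.
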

\begin{exa}
Let $q=5^2$ and $\Bbb F_q^*=\langle \gamma \rangle$.

(1) Let $h(z)=z^4+\gamma^5 z^3+\gamma^{14} z^2+\gamma^{23} z+\gamma^8 =(z-\gamma^8)(z-4)(z-\gamma^{17})(z-\gamma^{19})\in \Bbb F_q[z]$. Then $h(z)+h'(z)=((z+\gamma^2)(z+\gamma^{15}))^2\in \Bbb F_q[z]$ by Magma \cite{m}.
Let $s(z)=(z+\gamma^2)(z+\gamma^{15})\in \Bbb F_q[z]$,
$\boldsymbol \alpha=(\alpha_1,\alpha_2,\alpha_3,\alpha_4)=(\gamma^8,4,\gamma^{17},\gamma^{19})$, and
$\boldsymbol v=(v_1,v_2,v_3,v_4)=(4,\gamma^{11},4,\gamma^7)$, where $\alpha_i$ are the roots of $h(z)$ and $v_i=(s(\alpha_i))^{-1},i=1,\ldots,4$. Then $Hull(GRS_2(\boldsymbol \alpha, \boldsymbol v))=GRS_2(\boldsymbol \alpha, \boldsymbol v)$ are both $[4, 2, 3]$ GRS codes over $\Bbb F_q$, i.e., $GRS_2(\boldsymbol \alpha, \boldsymbol v)$ is a self-dual code over $\Bbb F_q$, and they have a generator matrix as follows:
$$
\left(\begin{array}{cccc}
1 & 0 & \gamma^{21} & 1 \\
0 & 1 & 4 & \gamma^{21}
\end{array}\right).
$$

(2) Let $h(z)=z^4 + \gamma^{23}z^3 + \gamma^{17}z^2 + 3z + \gamma^{23} =(z-\gamma^5)(z-\gamma^9)(z-\gamma^{15})(z-3)\in \Bbb F_q[z]$. Then $h(z)+h'(z)=((z+\gamma^9)(z+\gamma^{19}))^2\in \Bbb F_q[z]$ by Magma \cite{m}.
Let $s(z)=(z+\gamma^{9})(z+\gamma^{19})\in \Bbb F_q[z]$,
$\boldsymbol \alpha=(\alpha_1,\alpha_2,\alpha_3,\alpha_4)=(\gamma^5,\gamma^9,\gamma^{15},3)$, and
$\boldsymbol v=(v_1,v_2,v_3,v_4)=(4,\gamma^9,\gamma^7,4)$, where $\alpha_i$ are the roots of $h(z)$ and $v_i=(s(\alpha_i))^{-1},i=1,\ldots,4$. Then $Hull(GRS_2(\boldsymbol \alpha, \boldsymbol v))=GRS_2(\boldsymbol \alpha, \boldsymbol v)$ are both $[4, 2, 3]$ GRS codes over $\Bbb F_q$, i.e., $GRS_2(\boldsymbol \alpha, \boldsymbol v)$ is a self-dual code over $\Bbb F_q$, and they have a generator matrix as follows:
$$
\left(\begin{array}{cccc}
1 & 0 & \gamma^3 & \gamma^3 \\
0 & 1 & \gamma^{15} & \gamma^3
\end{array}\right).
$$
\end{exa}

\begin{cor}
In (\ref{im eq}), let $\deg u(z)=1$. If $k-\deg d(z)<\deg s(z)<k+\deg d(z)+1$, then the hull of $GRS_k(\boldsymbol \alpha, \boldsymbol v)$ is also a GRS code. Moreover,
if $k-\deg d(z) < \deg s(z)\le k$, then
$$
Hull(GRS_k(\boldsymbol \alpha, \boldsymbol v))=C_{\mathscr L}(D,(d(z))+(\deg d(z)+\deg s(z)-k-2)P_{\infty});
$$
if $k+1 \leq \deg s(z)<k+\deg d(z)+1$, then
$$
Hull(GRS_k(\boldsymbol \alpha, \boldsymbol v))=C_{\mathscr L}(D,(d(z))+(k+\deg d(z)-\deg s(z)-1)P_{\infty}).
$$
\end{cor}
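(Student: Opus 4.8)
The plan is to obtain this corollary by simply specializing Theorem \ref{main th} to the case $\varepsilon=1$. The first step is to record that the hypothesis $\deg u(z)=1$ forces $u(z)\neq 0$, so by the definition (\ref{v}) we have $\varepsilon=\deg u(z)=1$. In particular the standing requirement $2\deg d(z)+2>\varepsilon$ underlying Theorem \ref{main th} is automatic: $d(z)=\gcd(s(z),t(z))$ is a nonzero polynomial, hence $\deg d(z)\ge 0$ and $2\deg d(z)+2\ge 2>1=\varepsilon$.

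The second step is pure substitution. With $\varepsilon=1$, the global condition $k+\varepsilon-\deg d(z)-1<\deg s(z)<k+\deg d(z)+1$ in the final assertion of Theorem \ref{main th} becomes exactly $k-\deg d(z)<\deg s(z)<k+\deg d(z)+1$, which is the hypothesis of the corollary, so in this range the hull is already known to be a GRS code. For the explicit form, the first regime $k+\varepsilon-\deg d(z)-1<\deg s(z)<k+\frac{\varepsilon}{2}$ reads $k-\deg d(z)<\deg s(z)<k+\frac{1}{2}$, which, since $\deg s(z)$ is an integer, is the same as $k-\deg d(z)<\deg s(z)\le k$; in this regime (\ref{th2 eq2}) gives
$$
Hull(GRS_k(\boldsymbol\alpha,\boldsymbol v))=C_{\mathscr L}\bigl(D,\,(d(z))+(\deg d(z)+\deg s(z)-k-\varepsilon-1)P_{\infty}\bigr)=C_{\mathscr L}\bigl(D,\,(d(z))+(\deg d(z)+\deg s(z)-k-2)P_{\infty}\bigr),
$$
which is the first displayed formula. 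Similarly the second regime $k+\frac{\varepsilon}{2}\le\deg s(z)<k+\deg d(z)+1$ reads $k+\frac{1}{2}\le\deg s(z)<k+\deg d(z)+1$, i.e.\ $k+1\le\deg s(z)<k+\deg d(z)+1$ by integrality, and (\ref{th2 eq1}) gives directly $Hull(GRS_k(\boldsymbol\alpha,\boldsymbol v))=C_{\mathscr L}(D,(d(z))+(k+\deg d(z)-\deg s(z)-1)P_{\infty})$, the second displayed formula.

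The last step is to observe that the two integer ranges $k-\deg d(z)<\deg s(z)\le k$ and $k+1\le\deg s(z)<k+\deg d(z)+1$ together exhaust (and partition) all integers in the interval $(k-\deg d(z),\,k+\deg d(z)+1)$, so the stated case analysis is exhaustive and the corollary follows. I do not anticipate any genuine obstacle here: the only mild point requiring care is the passage from the half-integer cutoff $k+\frac{\varepsilon}{2}=k+\frac{1}{2}$ produced by Theorem \ref{main th} to the integer cutoffs $\le k$ and $\ge k+1$, and this is immediate because all the degrees involved are integers; everything else is a verbatim reading of Theorem \ref{main th} with $\varepsilon=1$.
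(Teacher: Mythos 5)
Your proposal is correct and matches the paper's (implicit) derivation exactly: the paper states this as an unproved corollary of Theorem \ref{main th}, obtained precisely by setting $\varepsilon=\deg u(z)=1$, noting $2\deg d(z)+2\ge 2>\varepsilon$, and converting the half-integer threshold $k+\tfrac12$ into the integer cutoffs $\le k$ and $\ge k+1$. All substitutions and the resulting formulas check out.
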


\begin{exa}
Let $q=5^2$ and $\Bbb F_q^*=\langle \gamma \rangle$.

(1) Let $h(z)=z^4+\gamma^3 z^3+\gamma^{22} z^2+\gamma^{17} z+\gamma^{11} =(z-\gamma)(z-\gamma^7)(z-\gamma^{13})(z-\gamma^{14})\in \Bbb F_q[z]$.
Then $zh(z)+h'(z)=(z+\gamma^9)(z^2+\gamma^9z+\gamma^{16})^2\in \Bbb F_q[z]$ by the Magma program.
Let $s(z)=z^2+\gamma^9z+\gamma^{16} \in \Bbb F_q[z]$,
$\boldsymbol \alpha=(\alpha_1,\alpha_2,\alpha_3,\alpha_4)=(\gamma,\gamma^7,\gamma^{13},\gamma^{14})$, and
$\boldsymbol v=(v_1,v_2,v_3,v_4)=(\gamma^5,2,3,\gamma)$, where $\alpha_i$ are the roots of $h(z)$ and $v_i=(s(\alpha_i))^{-1},i=1,\ldots,4$. By Corollary \ref{th appl} (2), if $k=1$, then $GRS_k(\boldsymbol \alpha, \boldsymbol v)$ is a self-orthogonal code.
Specifically, $Hull(GRS_1(\boldsymbol \alpha, \boldsymbol v))=GRS_1(\boldsymbol \alpha, \boldsymbol v)$ are both $[4, 1, 4]$ GRS codes over $\Bbb F_q$, i.e., $GRS_1(\boldsymbol \alpha, \boldsymbol v)$ is a self-orthogonal code over $\Bbb F_q$, and they have a generator matrix as follows:
$$
\left(\begin{array}{cccc}
1 & \gamma & \gamma^{13} & \gamma^{20}
\end{array}\right).
$$

(2) Let $h(z)=z^4 + \gamma^{4}z^3 + 3z^2 + \gamma z + \gamma^3 =(z-\gamma^3)(z-\gamma^9)(z-\gamma^{19})(z-\gamma^{20})\in \Bbb F_q[z]$. Then $zh(z)+h'(z)=(z+\gamma^{17})(z+\gamma^{14})^4\in \Bbb F_q[z]$ by Magma \cite{m}.
Let $s(z)=(z+\gamma^{17})(z+\gamma^{14})^2\in \Bbb F_q[z]$,
$\boldsymbol \alpha=(\alpha_1,\alpha_2,\alpha_3,\alpha_4)=(\gamma^3,\gamma^9,\gamma^{19},\gamma^{20})$, and
$\boldsymbol v=(v_1,v_2,v_3,v_4)=(\gamma^{10},\gamma^2,\gamma^{16},\gamma^8)$, where $\alpha_i$ are the roots of $h(z)$ and $v_i=(s(\alpha_i))^{-1},i=1,\ldots,4$. By Corollary \ref{th appl} (2), if $k=3$, then $GRS_k(\boldsymbol \alpha, \boldsymbol v)$ is a dual-containing code.
Specifically, $Hull(GRS_3(\boldsymbol \alpha, \boldsymbol v))=GRS_3(\boldsymbol \alpha, \boldsymbol v)^\perp$ are both $[4, 1, 4]$ GRS codes over $\Bbb F_q$, i.e., $GRS_3(\boldsymbol \alpha, \boldsymbol v)^\perp$ is a self-orthogonal code over $\Bbb F_q$, and they have a generator matrix as follows:
$$
\left(\begin{array}{cccc}
1 & \gamma^{16} & 2 & \gamma^{22}
\end{array}\right).
$$
\end{exa}

\begin{cor}
In (\ref{im eq}), let $\deg u(z)=2$ and $\deg d(z)\geq 1$. If $k-\deg d(z)+1<\deg s(z)<k+\deg d(z)+1$, then the hull of $GRS_k(\boldsymbol \alpha, \boldsymbol v)$ is also a GRS code. Moreover,
if $k-\deg d(z)+1< \deg s(z)<k+1$, then
$$
Hull(GRS_k(\boldsymbol \alpha, \boldsymbol v))=C_{\mathscr L}(D,(d(z))+(\deg d(z)+\deg s(z)-k-3)P_{\infty});
$$
if $k+1 \leq \deg s(z)<k+\deg d(z)+1$, then
$$
Hull(GRS_k(\boldsymbol \alpha, \boldsymbol v))=C_{\mathscr L}(D,(d(z))+(k+\deg d(z)-\deg s(z)-1)P_{\infty}).
$$
\end{cor}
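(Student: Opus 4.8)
The plan is to obtain this corollary as a direct specialization of Theorem \ref{main th}. Since $\deg u(z)=2$ in particular means $u(z)\neq 0$, definition (\ref{v}) gives $\varepsilon=\deg u(z)=2$, so we are genuinely in the branch $\varepsilon=\deg u(z)$ rather than the branch $\varepsilon=-1$. First I would check the side condition $2\deg d(z)+2>\varepsilon$ demanded in the remark following Theorem \ref{main th}: with $\varepsilon=2$ it reads $2\deg d(z)+2>2$, i.e. $\deg d(z)\geq 1$, which is exactly the standing hypothesis of the corollary, so Theorem \ref{main th} is applicable.

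Next I would translate the degree ranges. With $\varepsilon=2$ we have $k+\frac{\varepsilon}{2}=k+1$ and $k+\varepsilon-\deg d(z)-1=k-\deg d(z)+1$. Hence the overall hypothesis $k+\varepsilon-\deg d(z)-1<\deg s(z)<k+\deg d(z)+1$ of Theorem \ref{main th} becomes precisely $k-\deg d(z)+1<\deg s(z)<k+\deg d(z)+1$, and Theorem \ref{main th} already asserts that the hull is a GRS code throughout this range.

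For the explicit forms, the first subcase $k+\varepsilon-\deg d(z)-1<\deg s(z)<k+\frac{\varepsilon}{2}$ of Theorem \ref{main th} becomes $k-\deg d(z)+1<\deg s(z)<k+1$, and formula (\ref{th2 eq2}) gives the $P_\infty$-coefficient $\deg d(z)+\deg s(z)-k-\varepsilon-1=\deg d(z)+\deg s(z)-k-3$, which matches the claimed expression. The second subcase $k+\frac{\varepsilon}{2}\leq\deg s(z)<k+\deg d(z)+1$ becomes $k+1\leq\deg s(z)<k+\deg d(z)+1$, and formula (\ref{th2 eq1}), which does not involve $\varepsilon$, gives the coefficient $k+\deg d(z)-\deg s(z)-1$ unchanged. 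This establishes both displayed identities.

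There is no substantive obstacle here beyond bookkeeping; the only points requiring a little care are that $\deg u(z)=2$ forces $u(z)\neq 0$ so that $\varepsilon=2$, and that the strict versus non-strict inequalities at the boundary $\deg s(z)=k+1$ are inherited verbatim from the two cases of Theorem \ref{main th}.
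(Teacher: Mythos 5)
Your proposal is correct and is exactly the intended derivation: the paper states this corollary without a separate proof, treating it as the direct specialization of Theorem \ref{main th} to $\varepsilon=\deg u(z)=2$, with $\deg d(z)\geq 1$ supplying the required condition $2\deg d(z)+2>\varepsilon$ and the two degree ranges and $P_\infty$-coefficients obtained by substituting $\varepsilon=2$ into (\ref{th2 eq2}) and (\ref{th2 eq1}). Your bookkeeping of the thresholds $k+\frac{\varepsilon}{2}=k+1$ and $k+\varepsilon-\deg d(z)-1=k-\deg d(z)+1$, and of the strict versus non-strict boundary at $\deg s(z)=k+1$, matches the theorem exactly.
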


\begin{exa}
Let $q=7^2$ and $\Bbb F_q^*=\langle \gamma \rangle$. Let $h(z)=z^4 + z =z(z-\gamma^8)(z-\gamma^{24})(z-\gamma^{40})\in \Bbb F_q[z]$. Then $z^2h(z)+h'(z)=((z+\gamma^8)(z+\gamma^{24})(z+\gamma^{40}))^2\in \Bbb F_q[z]$ by Magma \cite{m}.
Let $s(z)=(z+\gamma^{8})(z+\gamma^{24})(z+\gamma^{40})\in \Bbb F_q[z]$,
$\boldsymbol \alpha=(\alpha_1,\alpha_2,\alpha_3,\alpha_4)=(0,\gamma^8,\gamma^{24},\gamma^{40})$, and
$\boldsymbol v=(v_1,v_2,v_3,v_4)=(\gamma^{24},\gamma^8,\gamma^8,\gamma^8)$, where $\alpha_i$ are the roots of $h(z)$ and $v_i=(s(\alpha_i))^{-1},i=1,\ldots,4$. Then $Hull(GRS_2(\boldsymbol \alpha, \boldsymbol v))=GRS_2(\boldsymbol \alpha, \boldsymbol v)$ are both $[4, 2, 3]$ GRS codes over $\Bbb F_q$, i.e., $GRS_2(\boldsymbol \alpha, \boldsymbol v)$ is a self-dual code over $\Bbb F_q$, and they have a generator matrix as follows:
$$
\left(\begin{array}{cccc}
1 & 0 & \gamma^8 & \gamma^{16}\\
0 & 1 & \gamma^{16} & \gamma^{32}
\end{array}\right).
$$
\end{exa}
\section{Conclusion and future work}
This study aims to determine hulls of GRS codes. First of all, a GRS code and its dual code are represented as rational algebraic geometric codes. Then the conditions that hulls of GRS codes are also GRS codes are provided by using the language of algebraic geometry codes. If the conditions are not satisfied, a method for finding the bases of hulls of GRS codes and formulas to compute the hull dimension are given. Besides, some examples are provided to show the conditions are too good to be improved. Moreover, some self-orthogonal GRS codes and self-dual GRS codes are shown, which  generalize results in \cite{GY}.

 An interesting question to be considered in the future is the search for Hermitian hulls of GRS codes.

\end{document}